\documentclass[lettersize,onecolumn]{IEEEtran}
\usepackage{amsmath,amsfonts}
\usepackage{algorithmic}
\usepackage{algorithm}
\usepackage{array}
\usepackage[caption=false,font=normalsize,labelfont=sf,textfont=sf]{subfig}
\usepackage{textcomp}
\usepackage{stfloats}
\usepackage{url}
\usepackage{verbatim}
\usepackage{graphicx}
\usepackage{cite}
\usepackage{amssymb}
\usepackage{mathrsfs}
\usepackage{color}
\usepackage{amsthm}
\usepackage{setspace}
\usepackage{rotating}
\usepackage{mathdots}
\usepackage{bm}
\begin{document}

\title{ Hermitian Self-dual Generalized Reed-Solomon Codes}%\\ for IEEE Journals and Transactions}

\author{Chun'e Zhao$^{1}$, Wenping Ma$^{2}$\\
\begin{spacing}{2.0}
\end{spacing}
\small{$1$ College of Sciences,
China University of Petroleum,
Qingdao 266555,
Shandong, China\\
$2$ School of Telecommunication Engineering,
 Xidian University,
  Xi'an, China\\
Email: zhaochune1981@163.com;\ wp\_ma@mail.xidian.edu.cn;\\}
}%~\IEEEmembership{Staff,~IEEE,}}
% <-this % stops a space
%\thanks{Chun'e Zhao is with the Faculty of College of Sciences, China University of Petroleum, Qingdao, Shandong, China }

%Shandong, China}}% <-this % stops a space
%\thanks{Wenping Ma is with the faculty of School of Telecommunication Engineering, Xidian University, Xi'an, China}

%\thanks{Tongjiang Yan is with the Faculty of College of Sciences, China University of Petroleum, Qingdao, Shandong, China }

%\thanks{Yuhua Sun is with the Faculty of College of Sciences, China University of Petroleum, Qingdao, Shandong, China }}

% The paper headers
%\markboth{Journal of \LaTeX\ Class Files,~Vol.~1, No.~2, December~2023}%
%{Shell \MakeLowercase{\textit{et al.}}: A Sample Article Using IEEEtran.cls for IEEE Journals}

%\IEEEpubid{0000--0000~\copyright~2023 IEEE}
% Remember, if you use this you must call \IEEEpubidadjcol in the second
% column for its text to clear the IEEEpubid mark.

\maketitle
\begin{abstract}
 Maximum Distance Separable (MDS) self-dual codes are of significant theoretical and practical importance. Generalized Reed-Solomon (GRS) codes are the most prominent MDS codes. Correspondingly there have been many research on constructions of Euclidean self-dual MDS codes by using GRS codes. However, the study on Hermitian self-dual GRS codes is relatively limited. Since Hermitian self-dual GRS codes do not exist for $n>q+1$,  this paper is devoted to an investigation of GRS codes in the case where $n\le q+1$. First, we prove that when $n\leq q+1$, there are only two classes of Hermitian self-dual GRS codes, confirming the conjecture in \cite{2019yue} and providing its proof simultaneously. Second, we present two explicit construction methods. Thus, the  existence and construction of Hermitian self-dual GRS codes are fully solved.

\end{abstract}

\begin{IEEEkeywords}
MDS codes, generalized Reed-Solomon codes, Hermitian self-dual codes
\end{IEEEkeywords}

\section{Introduction}

\IEEEPARstart{L}et $\mathbb{F}_{q}$ be a finite filed of size $q$, where $q$ is a prime power and $\mathbb{F}_{q}^{*}=\mathbb{F}_{q}\backslash \{0\}$. An $[n, k, d]$ linear code $C$ is a subspace of the linear space $\mathbb{F}_ {q}^{n}$ over $\mathbb{F}_ {q}$ with dimension $k$ and minimum Hamming distance $d$. It is well known that the parameters of $C$ satisfy $d\leq n-k+1$. If $d=n-k+1$, $C$ is called a maximum distance separable (MDS) code.

For a linear code $C$, we use $C^{\perp E}$ and $C^{\perp H}$ to denote the Euclidean dual code of $C$ and the Hermitian dual code of $C$, respectively. If $C=C^{\perp E}$ , then the linear code $C$ is called Euclidean self-dual. And if $C=C^{\perp H}$, then $C$ is called Hermitian self-dual. 

Both MDS codes and self-dual codes possess extensive and significant applications across numerous research and engineering domains. Due to the maximum error-correcting capability, MDS codes are  widely used in various data storage systems, such as the coding design of distributed storage systems \cite{2018}. Meanwhile,  MDS codes also have close connections with mathematical branches such as combinatorial design and finite geometry \cite{1977}. Self-dual codes, on the other hand, show important value in fields such as cryptography \cite{2008-1}\cite{2008-2}and lattice theory \cite{1999}\cite{2003}. A linear code is called an MDS self-dual code if it is both MDS and self-dual.

In recent years, the investigation of MDS sel-dual codes has become a hot topic in algebrac coding theory. Extensive research  has been devoted to the study of Euclidean or Hermitian self-dual MDS codes. Kim and Lee proposed an efficient construction method for self-dual codes with respect to the Euclidean or Hermitian inner product and constructed many corresponding self-dual MDS codes in \cite{2004-1}. Guenda established a systematic construction for Euclidean and Hermitian self-dual MDS codes via  extended cyclic duadic codes or negacyclic codes in \cite{2012}. 

The GRS codes are the most prominent MDS codes. It is meaningful to construct MDS self-dual codes via GRS codes. A disciplined construction of Euclidean self-dual MDS codes utilizing GRS codes was first proposed by Jin and Xing in \cite{2017jinlingfei}. Subsequently,  GRS codes become one of the most prevalent and effective tools to construct MDS self-dual codes. Yan established a necessary and sufficient condition under which the GRS code is an Euclidean self-dual MDS code in \cite{2018-2}. Labad et al. constructed more classes of Euclidean self-dual MDS codes in \cite{2019liu}. Fang W. et al. \cite{2019fang} constructed six additional classes of q-ary MDS Euclidean self-dual codes by using GRS codes and extended GRS codes. Fang X. et al. \cite{2020liu} produced several classes of Euclidean self-dual MDS codes via (extended) GRS codes. Ning Y. et al. \cite{2021ge} focuses on constructions of MDS  Euclidean self-dual codes from (extended) GRS codes and extended the consecutive range. Fang W. et al. in \cite{2022fang} gave a systematic way to construct   Euclidean self-dual GRS codes with flexible evaluation points. Zhang and Feng presented a unified approach on the existence of MDS  Euclidean self-dual codes in \cite{2020feng}. Thus the research on Euclidean self-dual GRS codes has become increasingly abundant. However, the study of Hermitian self-dual GRS codes is relatively scarce. In 2019, Niu and Yue et al. in \cite{2019yue} provided two classes of the Hermitian self-dual MDS GRS codes and proved that Hermitian self-dual MDS GRS codes must be above two classes for $n=4$ and conjectured that the result also holds for each even length $n$, $4\leq n\leq q+1$. In 2021, Guo and Li  \cite{2021li} revised and improved the results in \cite{2019yue} and conjectured that there are no Hermitian self-dual GRS codes when $n>q+1$ by Magma programming. In 2025, Wan and Zhu theoretically proved it in \cite{2025zhu}. Besides this, there is no other relevant introduction to Hermitian self-dual GRS codes.
 
In order to elaborate on the content of this subject, this paper focuses on Hermitian self-dual MDS codes and GRS codes for $n\leq q+1$. Specifically, we demonstrate that only two classes of Hermitian self-dual GRS codes exist for $n\leq q+1$; this result not only verifies the correctness of the conjecture proposed in \cite{2019yue} but also completes its rigorous proof. Furthermore, two specific constructions for such Hermitian self-dual GRS codes are provided herein. Consequently, the research on the existence and constructions of Hermitian self-dual GRS codes is completely resolved.

The organization of the rest of the paper is as follows. In Section 2, we will introduce some basic knowledge and auxiliary results on GRS codes and Hermitian self-dual codes. In Section 3, we will present our main results on Hermitian self-dual GRS codes.
In Section 4, we will give two specific constructions of MDS Hermitian self-dual codes by using GRS codes.

\section{Preliminaries}
In this section, we introduce basic definitions of Hermitian self-dual codes and GRS codes, which will be employed in our sebsequent discussion.
\subsection{ Hermitian Self-Dual Codes}
Let $q$ be a prime power and $\mathbb{F}_{q^{2}}$ be a finite field with $q^{2}$ elements . Suppose $\bm{x}=(x_{1},x_{2},\cdots,x_{n})$ and $\bm{y}=(y_{1},y_{2},\cdots,y_{n})$ are two vectors in $\mathbb{F}_{q^{2}}^{n}$. There are two inner products as follows.

The Euclidean inner product of $\bm{x}$ and $\bm{y}$ is defined as:
$$<\bm{x},\bm{y}>_{E}=x_{1}y_{1}+x_{2}y_{2}+\cdots+x_{n}y_{n}.$$
The Hermitian inner product of $\bm{x}$ and $\bm{y}$ is defined as:
$$<\bm{x},\bm{y}>_{H}=x_{1}y_{1}^{q}+x_{2}y_{2}^{q}+\cdots+x_{n}y_{n}^{q}.$$
Let $C$ be a linear code of length $n$ over $\mathbb{F}_{q^{2}}$. The Euclidean dual of $C$, denoted $C^{\perp E}$, is defined as
\begin{center}
   $C^{\perp E}=\{\bm{x}\in \mathbb{F}_{q^2}^{n}|<\bm{x},\bm{y}>_{E}=0, \text{ for all } \bm{y}\in C\}$.
\end{center}
Similarly, the Hermitian dual code of $C$, denoted $C^{\perp H}$, is defined as
\begin{center}
   $C^{\perp H}=\{\bm{x}\in \mathbb{F}_{q^2}^{n}|<\bm{x},\bm{y}>_{H}=0, \text{ for all } \bm{y}\in C\}$.  
\end{center}
 If $C=C^{\perp H}$, then $C$ is called Hermitian self-dual code.
%For vector $\bm{v}=(v_{1},v_{2},\cdots,v_{n})\in \mathbb{F}_{q^2}^{n}$, $\bm{v}^{q}$ is defined to be the vector $(v_{1}^{q},v_{2}^{q},\cdots,v_{n}^{q})$. For a subset $V$ of 
%$\mathbb{F}_{q^2}^{n}$, we define $V^{q}$ to be the set $\{\bm{v}^{q}|\bm{v}\in V\}$. It is easy to verify that $C^{\perp H}=(C^{q})^{\perp E}$. Note that the code $C$ is a Hermitian self-dual code if and only if $C=(C^{q})^{\perp E}$. Namely $C^{q}=C^{\perp E}$.
\subsection{ GRS Codes}
Let $k,n$ be positive integers with $k\leq n$. Define the set 
$$\mathbb{F}_{q^{2}}[x]_{k}=\{f(x)\in \mathbb{F}_{q^{2}}[x]|deg(f(x))\leq k-1\}.$$
It is easy to verify that $\mathbb{F}_{q^{2}}[x]_{k}$ is a $k-$dimensional linear space over $\mathbb{F}_{q^2}$. Let $\bm{\alpha}=(\alpha_{1},\alpha_{2},\cdots,\alpha_{n})\in (\mathbb{F}_{q^2})^{n}$, where $\alpha_{1},\alpha_{2},\cdots,\alpha_{n}$ are $n$ distinct elements in $\mathbb{F}_{q^2}$, and $\bm{v}=(v_{1},v_{2},\cdots,v_{n})\in (\mathbb{F}_{q^2}^{*})^{n}$. Then the generalized Reed-Solomon code, $GRS_{n,k}(\bm{\alpha},\bm{v})$, associated with $\bm{\alpha}$ and $\bm{v}$ is defined by 
$$GRS_{n,k}(\bm{\alpha},\bm{v})=\{(v_{1}f(\alpha_{1}),v_{2}f(\alpha_{2}),\cdots,v_{n}f(\alpha_{n}))|\text{ for all }  f(x)\in \mathbb{F}_{q^{2}}[x]_{k}\}.$$
In fact $GRS_{n,k}({\bm\alpha},{\bm v})$ can also be viewed as a linear code with the following generator matrix:
$$G=[I_{k},0]V_{n}(\bm{\alpha})diag(\bm{v}),$$ where 
\begin{equation}\label{v matirx}
 I_{k}=\left(\begin{array}{cccc}
            1 &   &        &   \\
              & 1 &        &   \\
              &   & \ddots &   \\
              &   &        & 1
        \end{array}
    \right),V_{n}({\bm \alpha})=\left(\begin{array}{cccc}
            1                & 1                & \cdots                   & 1                \\
            \alpha_{1}       & \alpha_{2}       & \cdots                   & \alpha_{n}       \\
            \vdots           & \vdots           & \textcolor{blue}{\ddots} & \vdots           \\
            \alpha_{1}^{n-1} & \alpha_{2}^{n-1} & \cdots                   & \alpha_{n}^{n-1}
        \end{array}\right),\mathrm{diag}({\bm v})=\left(\begin{array}{cccc}
            v_{1} &       &        &       \\
                  & v_{2} &        &       \\
                  &       & \ddots &       \\
                  &       &        & v_{n}
        \end{array}\right).
\end{equation}
%As we know, the code $GRS_{n,k}(\bm{\alpha},\bm{v})$ is an MDS code over $\mathbb{F}_{q^{2}}$ with parameters $[n,k,n-k+1]$. Furthermore, if the code $GRS_{n,k}(\bm{\alpha},\bm{v})$ is MDS self-dual code then $n$ is even and $k=\frac{n}{2}$. The generator matrix of the code $GRS_{k}(\bm{\alpha},\bm{v})$ is defined as:
%$$G_{k}(\bm{\alpha},\bm{v})=\left(\begin{array}{cccc}
%     v_{1}\alpha_{1}^{0}& v_{2}\alpha_{2}^{0}&\cdots&v_{n}\alpha_{n}^{0} \\
%     v_{1}\alpha_{1}^{1}& v_{2}\alpha_{2}^{1}&\cdots&v_{n}\alpha_{n}^{1} \\  
%     \vdots&\vdots&&\vdots\\
%       v_{1}\alpha_{1}^{k-1}& v_{2}\alpha_{2}^{k-1}&\cdots&v_{n}\alpha_{n}^{k-1} 
 %   \end{array}
 %   \right)$$
    
\newtheorem{definition}{Definition}

\section{Hermitian Self-dual GRS codes}
In this section, we initiate our study of Hermitian self-dual GRS codes by employing matrix theory and linear feedback shift register sequence theory. We first introduce some relevant notation that will be used throughout our discussion.

\begin{itemize}
\item[(1)]  For any vector $\bm{\xi}\in\mathbb{F}_{q^{2}}^{n}$, $\bm{\xi}^{T}$ represents the transpose of $\bm{\xi}$.\\
    \item [(2)] For any sequence $S=(s_{0},s_{1},s_{2},\cdots)$, $L^{(i)}(S)=(s_{i},s_{i+1},s_{i+2},\cdots)$, $L^{(i)}(S)_{[j]}=(s_{i},s_{i+1},s_{i+2},\cdots,s_{i+j-1})$.\\
    \item[(3)] For any matrix $A_{m\times n}$, $A[i_{1},i_{2},\cdots,i_{t};j_{1},j_{2},\cdots,j_{s}]$ represents the sub matrix by rows $i_{1},i_{2},\cdots,i_{t}$ and columns $j_{1},j_{2},\cdots,j_{s}$ of $A$, where $0\leq i_{k}\leq m-1$ and $0\leq j_{k}\leq n-1$.
\end{itemize}

\newtheorem{theorem}{Theorem}
\begin{theorem}\label{State transition matrix}
    Let $\alpha_{1},\alpha_{2},\cdots,\alpha_{n}$ be distinct elements in the finite field $\mathbb{F}_{q^{2}}$, where $q$ is a prime power. Let $G(x)=\prod\limits_{i=1}^{n}(x-\alpha_{i})=x^{n}-\sum\limits_{i=0}^{n-1}c_{i}x^{i}$, then the matrix 
     \begin{equation}\label{statement transaction matrix}
     T=\left(\begin{array}{ccccc}
    0&1&&&\\
    &0&1&&\\
    &&0&\ddots&\\
    &&&\ddots&1\\
    c_{0}&c_{1}&c_{2}&\cdots&c_{n-1}
    \end{array}
    \right)   
    \end{equation}
    has $\alpha_{1},\alpha_{2},\cdots,\alpha_{n}$ as its eigenvalues, and the eigenvector corresponding to each $\alpha_{i}$ is $(1,\alpha_{i},\alpha_{i}^{2},\cdots,\alpha_{i}^{n-1})^{T}$
  for $i=1,2,\cdots,n$.
\end{theorem}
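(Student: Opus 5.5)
The plan is to verify the eigenvector relation directly by multiplying $T$ against the Vandermonde-type column vector. Fix $i$ and set $\bm{u}_{i}=(1,\alpha_{i},\alpha_{i}^{2},\cdots,\alpha_{i}^{n-1})^{T}$. We will compute $T\bm{u}_{i}$ row by row and show that it equals $\alpha_{i}\bm{u}_{i}$.

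For the first $n-1$ rows, row $j$ of $T$ (indexing rows and columns from $0$, as in the paper's notation) has a single $1$ in column $j+1$. Its product with $\bm{u}_{i}$ is therefore $\alpha_{i}^{j+1}=\alpha_{i}\cdot\alpha_{i}^{j}$, which is $\alpha_{i}$ times the $j$-th entry of $\bm{u}_{i}$.

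The last row is $(c_{0},c_{1},\cdots,c_{n-1})$, and its product with $\bm{u}_{i}$ is $\sum_{t=0}^{n-1}c_{t}\alpha_{i}^{t}$. Since $\alpha_{i}$ is a root of $G(x)=x^{n}-\sum_{t=0}^{n-1}c_{t}x^{t}$, this sum equals $\alpha_{i}^{n}=\alpha_{i}\cdot\alpha_{i}^{n-1}$, again $\alpha_{i}$ times the last entry of $\bm{u}_{i}$. Hence $T\bm{u}_{i}=\alpha_{i}\bm{u}_{i}$. Because the first coordinate of $\bm{u}_{i}$ is $1$, we have $\bm{u}_{i}\neq 0$, so $\alpha_{i}$ is an eigenvalue of $T$ with eigenvector $\bm{u}_{i}$.

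To conclude that $\alpha_{1},\cdots,\alpha_{n}$ are \emph{all} the eigenvalues of $T$, note that the $n\times n$ matrix $T$ has at most $n$ eigenvalues counted with multiplicity. The $\alpha_{i}$ are $n$ distinct eigenvalues, so they exhaust the spectrum, each with multiplicity one. Equivalently, the columns $\bm{u}_{i}$ form $V_{n}(\bm{\alpha})$, which is invertible by the Vandermonde determinant since the $\alpha_{i}$ are distinct. This gives $TV_{n}(\bm{\alpha})=V_{n}(\bm{\alpha})\,\mathrm{diag}(\alpha_{1},\cdots,\alpha_{n})$, so $T$ is diagonalizable and its characteristic polynomial is $G(x)$.

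There is no real obstacle here. The only care needed is to keep the $0$-based row and column indexing consistent with the paper's notation, and to invoke the root condition $G(\alpha_{i})=0$ in the last row.
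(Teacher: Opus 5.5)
Your proof is correct, but it reaches the spectrum by a different route from the paper. The paper first computes the characteristic polynomial of the companion matrix directly. It asserts $|xI_{n}-T|=x^{n}-c_{n-1}x^{n-1}-\cdots-c_{1}x-c_{0}=G(x)$ without spelling out the cofactor expansion, reads off the eigenvalues as the roots of $G$, and only then checks $T(1,\alpha_{i},\cdots,\alpha_{i}^{n-1})^{T}=\alpha_{i}(1,\alpha_{i},\cdots,\alpha_{i}^{n-1})^{T}$ using $\alpha_{i}^{n}=\sum_{t}c_{t}\alpha_{i}^{t}$, exactly as you do.

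You skip the determinant entirely. The eigenvector check together with distinctness gives $n$ distinct eigenvalues of an $n\times n$ matrix, so they exhaust the spectrum. The identity $TV_{n}(\bm{\alpha})=V_{n}(\bm{\alpha})\,\mathrm{diag}(\alpha_{1},\cdots,\alpha_{n})$ with invertible Vandermonde $V_{n}(\bm{\alpha})$ then recovers the characteristic polynomial as $G(x)$.

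The paper's route is more general: the companion-matrix identity holds for any monic $G$, even one with repeated roots, whereas your argument genuinely uses the distinctness hypothesis. Your route, on the other hand, is self-contained. It also produces the diagonalization $T=V_{n}(\bm{\alpha})\,\mathrm{diag}(\alpha_{1},\cdots,\alpha_{n})\,V_{n}(\bm{\alpha})^{-1}$, which is precisely the form used later. There the paper writes $f(T^{q})V_{n}(\bm{\alpha})=V_{n}(\bm{\alpha})\,\mathrm{diag}(f(\alpha_{1}^{q}),\cdots,f(\alpha_{n}^{q}))$ in the proof that $\beta_{0},\cdots,\beta_{k-1}$ are linearly independent.
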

$\ $
$\ $
\begin{proof}
The characteristic polynomial of the matrix $T$ is 
$$\left|xI_{n}-T\right|=\left|\begin{array}{ccccc}
    x&-1&&&\\
    &x&-1&&\\
    &&\ddots&\ddots&\\
    &&&x&-1\\
   -c_{0}&-c_{1}&-c_{2}&\cdots&x-c_{n-1}
    \end{array}\right|=x^{n}-c_{n-1}x^{n-1}-\cdots-c_{1}x-c_{0}=G(x),$$
where $I_{n}$ is defined as in Eq. (\ref{v matirx}). Thus, the eigenvalues of $T$ are $\alpha_{1},\alpha_{2},\cdots,\alpha_{n}$.
    Moreover, $\alpha_{i}^{n}=c_{n-1}\alpha_{i}^{n-1}+\cdots+c_{1}\alpha_{i}+c_{0}$, so
   
    $$T\left(\begin{array}{c}
    1\\
    \alpha_{i}\\
    \alpha_{i}^{2}\\
    \vdots\\
    \alpha_{i}^{n-1}\end{array}
    \right)=\left(\begin{array}{ccccc}
    0&1&&&\\
    &0&1&&\\
    &&0&\ddots&\\
    &&&\ddots&1\\
    c_{0}&c_{1}&c_{2}&\cdots&c_{n-1}
    \end{array}
    \right)\left(\begin{array}{c}
    1\\
    \alpha_{i}\\
    \alpha_{i}^{2}\\
    \vdots\\
    \alpha_{i}^{n-1}\end{array}
    \right)=\left(\begin{array}{c}
       \alpha_{i}\\
    \alpha_{i}^{2}\\
    \vdots\\
    \alpha_{i}^{n-1}\\
    \alpha_{i}^{n} \end{array}
    \right)=\alpha_{i}\left(\begin{array}{c}
    1\\
    \alpha_{i}\\
    \alpha_{i}^{2}\\
    \vdots\\
    \alpha_{i}^{n-1}\end{array}
   \right)$$ for $i=1,2,\cdots,n$.
\end{proof}
\begin{theorem}\label{Annihilating polynomial}
  Let $\alpha_{1},\alpha_{2},\cdots,\alpha_{n}$ be distinct elements in the finite field $\mathbb{F}_{q^{2}}$. Let $x_{1},x_{2},\cdots,x_{n}\in \mathbb{F}_{q^{2}}$, and define $\Delta_{i}=\sum\limits_{l=1}^{n}\alpha_{l}^{i}x_{l}$ for $i=0,1,\cdots$. If $\alpha_{1},\alpha_{2},\cdots,\alpha_{n}$ are the roots of $x^{m}+a_{m-1}x^{m-1}+\cdots+a_{1}x+a_{0}=0$, then 
  $$\Delta_{m+i}+a_{m-1}\Delta_{m-1+i}+\cdots+a_{1}\Delta_{1+i}+a_{0}\Delta_{i}=0$$for any non-negative integer $i$.
\end{theorem}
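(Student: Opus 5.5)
The plan is a direct substitution of the definition of $\Delta_{j}$ into the left-hand side, followed by an exchange of the order of summation. Write $P(x)=x^{m}+a_{m-1}x^{m-1}+\cdots+a_{1}x+a_{0}$ and set $a_{m}=1$. By hypothesis $P(\alpha_{l})=0$ for every $l=1,2,\cdots,n$.

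Fix a non-negative integer $i$. Using $\Delta_{j+i}=\sum_{l=1}^{n}\alpha_{l}^{j+i}x_{l}$, the left-hand side becomes
$$\sum_{j=0}^{m}a_{j}\Delta_{j+i}=\sum_{j=0}^{m}a_{j}\sum_{l=1}^{n}\alpha_{l}^{j+i}x_{l}.$$
Both sums are finite, so we may interchange them:
$$\sum_{l=1}^{n}x_{l}\,\alpha_{l}^{i}\sum_{j=0}^{m}a_{j}\alpha_{l}^{j}=\sum_{l=1}^{n}x_{l}\,\alpha_{l}^{i}\,P(\alpha_{l}).$$
Each term vanishes because $P(\alpha_{l})=0$, so the whole sum is $0$, as required.

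This follows the same idea as the computation in Theorem \ref{State transition matrix}, where the relation $\alpha_{i}^{n}=\sum c_{j}\alpha_{i}^{j}$ was used. It can also be phrased in the language of linear feedback shift registers: each sequence $(\alpha_{l}^{i})_{i\ge 0}$ satisfies the recurrence with characteristic polynomial $P$. The set of sequences satisfying a fixed linear recurrence is a vector space, so the linear combination $(\Delta_{i})_{i\geq 0}$ satisfies it too.

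There is essentially no obstacle. The only points to note are that $m$ need not equal $n$, and that the $\alpha_{l}$ need not be the only roots of $P$. Neither matters, since the argument only uses that each $\alpha_{l}$ is a root of $P$. Distinctness of the $\alpha_{l}$ is not needed either.
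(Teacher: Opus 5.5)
Your proof is correct and is essentially the paper's argument. The paper multiplies $P(\alpha_j)=0$ by $\alpha_j^{i}$ and writes $(a_0,\dots,a_{m-1},1)$ times the matrix of powers $\alpha_l^{i},\dots,\alpha_l^{m+i}$, then times the column vector $(x_1,\dots,x_n)^T$, which is exactly your interchange of the two finite sums written in matrix form.
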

\begin{proof}
Since  $\alpha_{1},\alpha_{2},\cdots,\alpha_{n}$ are the roots of $x^{m}+a_{m-1}x^{m-1}+\cdots+a_{1}x+a_{0}=0$, then  $$\alpha_{j}^{m}+a_{m-1}\alpha_{j}^{m-1}+\cdots+a_{1}\alpha_{j}+a_{0}=0,j=1,2,\cdots,n.$$
i.e. $\alpha_{j}^{i}\left(\alpha_{j}^{m}+a_{m-1}\alpha_{j}^{m-1}+\cdots+a_{1}\alpha_{j}+a_{0}\right)=\alpha_{j}^{m+i}+a_{m-1}\alpha_{j}^{m-1+i}+\cdots+a_{1}\alpha_{j}^{1+i}+a_{0}\alpha_{j}^{i}=0$ for $j=1,2,\cdots,n$, i.e.
$$\left(a_{0},a_{1},\cdots,a_{m-1},1\right)\left(\begin{array}{cccc}
\alpha_{1}^{i}&\alpha_{2}^{i}&\cdots&\alpha_{n}^{i}\\
\alpha_{1}^{i+1}&\alpha_{2}^{i+1}&\cdots&\alpha_{n}^{i+1}\\
\vdots&\vdots&&\vdots\\
\alpha_{1}^{m-1+i}&\alpha_{2}^{m-1+i}&\cdots&\alpha_{n}^{m-1+i}\\
\alpha_{1}^{m+i}&\alpha_{2}^{m+i}&\cdots&\alpha_{n}^{m+i}
\end{array}
\right)=0.$$ Then $\Delta_{m+i}+a_{m-1}\Delta_{m-1+i}+\cdots+a_{1}\Delta_{1+i}+a_{0}\Delta_{i}=
\left(a_{0},a_{1},\cdots,a_{m-1},1\right)\left(\begin{array}{c}
\Delta_{i}\\
\Delta_{1+i}\\
\vdots\\
\Delta_{m-1+i}\\
\Delta_{m+i}
\end{array}\right)$
$$=\left(a_{0},a_{1},\cdots,a_{m-1},1\right)\left(\begin{array}{cccc}
\alpha_{1}^{i}&\alpha_{2}^{i}&\cdots&\alpha_{n}^{i}\\
\alpha_{1}^{1+i}&\alpha_{2}^{1+i}&\cdots&\alpha_{n}^{1+i}\\
\vdots&\vdots&&\vdots\\
\alpha_{1}^{m-1+i}&\alpha_{2}^{m-1+i}&\cdots&\alpha_{n}^{m-1+i}\\
\alpha_{1}^{m+i}&\alpha_{2}^{m+i}&\cdots&\alpha_{n}^{m+i}
\end{array}
\right)\left(\begin{array}{c}
x_{1}\\
x_{2}\\
\vdots\\
x_{n-1}\\
x_{n}
\end{array}\right)=0.$$
\end{proof}
By Theorems \ref{State transition matrix} and \ref{Annihilating polynomial}, we immediately obtain the following result.
\newtheorem{corollary}{Corollary}
\begin{corollary}\label{Statement trasaction}
 Let $\alpha_{1},\alpha_{2},\cdots,\alpha_{n}$ be distinct elements in the finite field $\mathbb{F}_{q^{2}}$. Let  $G(x)=\prod\limits_{i=1}^{n}(x-\alpha_{i})=x^{n}-\sum\limits_{i=0}^{n-1}c_{i}x^{i}$, and let $x_{1},x_{2},\cdots,x_{n}\in \mathbb{F}_{q^{2}}$. Define $\Delta_{i}=\sum\limits_{l=1}^{n}\alpha_{l}^{i}x_{l},i=0,1,\cdots$. Let the matrix $T$ be defined as in Eq. (\ref{statement transaction matrix}).
    Then $$T\left(\begin{array}{c}
   \Delta_{i}\\
    \Delta_{i+1}\\
     \vdots\\
      \Delta_{i+n-1}\end{array}\right)=\left(\begin{array}{c}
   \Delta_{i+1}\\
    \Delta_{i+2}\\
     \vdots\\
     \Delta_{i+n}\end{array}\right).$$
\end{corollary}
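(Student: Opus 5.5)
The plan is to compare both sides of the claimed identity row by row, because $T$ is a companion matrix and acts very simply on a column vector. Since every row of $T$ except the last has a single $1$ on the superdiagonal, for $j=1,\dots,n-1$ the $j$-th entry of $T(\Delta_{i},\dots,\Delta_{i+n-1})^{T}$ is just $\Delta_{i+j}$. These agree with the first $n-1$ entries of the right-hand side with no further work.

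The only real content is the last row, whose entry is $c_{0}\Delta_{i}+c_{1}\Delta_{i+1}+\cdots+c_{n-1}\Delta_{i+n-1}$. To see that this equals $\Delta_{i+n}$, I will apply Theorem \ref{Annihilating polynomial} with $m=n$ and the polynomial $G(x)=x^{n}-\sum_{j=0}^{n-1}c_{j}x^{j}$, so that $a_{j}=-c_{j}$. By construction the $\alpha_{l}$ are the roots of $G$, so the hypothesis of that theorem holds. It gives $\Delta_{n+i}-\sum_{j=0}^{n-1}c_{j}\Delta_{j+i}=0$ for every $i\ge 0$, which is exactly the required equality for the last entry.

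Alternatively, one could use Theorem \ref{State transition matrix}. Write $(\Delta_{i},\dots,\Delta_{i+n-1})^{T}=\sum_{l}x_{l}\alpha_{l}^{i}(1,\alpha_{l},\dots,\alpha_{l}^{n-1})^{T}$ and use that each summand is an eigenvector of $T$ with eigenvalue $\alpha_{l}$. By linearity, $T$ applied to this vector gives $\sum_{l}x_{l}\alpha_{l}^{i+1}(1,\dots,\alpha_{l}^{n-1})^{T}$, which is the shifted vector.

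There is no genuine obstacle here. The only point needing care is matching signs and indices between the coefficients $c_{j}$ of $G$ and the coefficients $a_{j}$ in Theorem \ref{Annihilating polynomial}.
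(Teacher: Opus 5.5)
Your proposal is correct and follows essentially the paper's own proof. The paper likewise applies Theorem \ref{Annihilating polynomial} to $G(x)$ (with $m=n$ and $a_{j}=-c_{j}$) to get $\Delta_{n+i}=\sum_{j=0}^{n-1}c_{j}\Delta_{j+i}$ for the last row, leaving the shift rows implicit, which you correctly make explicit. Your alternative eigenvector argument via Theorem \ref{State transition matrix} is also valid.
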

\begin{proof}
 It is obvious that $\alpha_{1},\alpha_{2},\cdots,\alpha_{n}$ are the roots of $G(x)=0$. So 
 $$\alpha_{j}^{n}-c_{n-1}\alpha_{j}^{n-1}-\cdots-c_{1}\alpha_{j}-c_{0}=0,j=1,2,\cdots,n.$$
 By Theorem \ref{Annihilating polynomial}, we have the following expression
 $$\Delta_{n+i}-c_{n-1}\Delta_{n-1+i}-\cdots-c_{1}\Delta_{1+i}-c_{0}\Delta_{i}=0.$$
\end{proof}
\newtheorem{lemma}{Lemma}
\begin{lemma}\label{nonzero solution}
Let $n=2k$ be even and let $\bm\alpha=(\alpha_{1},\alpha_{2},\cdots,\alpha_{n})\in  (\mathbb{F}_{q^{2}})^{n}$, where $\alpha_{1},\alpha_{2},\cdots,\alpha_{n}$ are distinct elements. Then there exists a vector $\bm v=(v_{1},v_{2},\cdots,v_{n})\in (\mathbb{F}_{q^{2}}^{*})^{n}$ such that $GRS_{n,k}(\bm\alpha,\bm v)$ is a $q^{2}$-ary Hermitian self-dual if and only if the system of equations   $\sum\limits_{l=1}^{n}\alpha_{l}^{i+jq}x_{l}=0,0\leq i,j\leq k-1$
has a solution $\bm x=(x_{1},x_{2},\cdots,x_{n})\in (\mathbb{F}_{q}^{*})^{n}.$   
\end{lemma}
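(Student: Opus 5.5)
The plan is to translate Hermitian self-duality of $C=GRS_{n,k}(\bm\alpha,\bm v)$ directly into conditions on the products $v_l^{q+1}$. Since $\dim C=k=n/2$, we have $\dim C^{\perp H}=n-k=k$. Hence $C=C^{\perp H}$ holds if and only if $C\subseteq C^{\perp H}$, that is, if and only if every pair of rows of the generator matrix $G=[I_k,0]V_n(\bm\alpha)\mathrm{diag}(\bm v)$ is Hermitian orthogonal. The $i$-th row is $(v_1\alpha_1^i,\dots,v_n\alpha_n^i)$ for $0\le i\le k-1$. The Hermitian inner product of rows $i$ and $j$ is therefore
$$\sum_{l=1}^{n} v_l\alpha_l^{i}\,(v_l\alpha_l^{j})^{q}=\sum_{l=1}^{n}\alpha_l^{i+jq}\,v_l^{q+1}.$$
So $C$ is Hermitian self-dual exactly when $\bm x=(v_1^{q+1},\dots,v_n^{q+1})$ solves the system $\sum_{l}\alpha_l^{i+jq}x_l=0$ for $0\le i,j\le k-1$.

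For necessity, suppose such a $\bm v\in(\mathbb{F}_{q^2}^*)^n$ exists. We take $x_l=v_l^{q+1}$. This is the norm $N_{\mathbb{F}_{q^2}/\mathbb{F}_q}(v_l)$, so it lies in $\mathbb{F}_q$. It is nonzero because $v_l\neq 0$. By the computation above, $\bm x\in(\mathbb{F}_q^*)^n$ is a solution.

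For sufficiency, suppose a solution $\bm x\in(\mathbb{F}_q^*)^n$ is given. We use the fact that the norm map $\mathbb{F}_{q^2}^*\to\mathbb{F}_q^*$, $v\mapsto v^{q+1}$, is surjective. To see this, note that $\mathbb{F}_{q^2}^*$ is cyclic of order $q^2-1$, so the kernel of $v\mapsto v^{q+1}$ has size $q+1$. The image then has size $q-1=|\mathbb{F}_q^*|$. So for each $l$ we can choose $v_l\in\mathbb{F}_{q^2}^*$ with $v_l^{q+1}=x_l$. Reading the computation backwards, all pairs of rows of $G$ are Hermitian orthogonal, so $C\subseteq C^{\perp H}$. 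The dimension count then gives $C=C^{\perp H}$.

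No step here is a real obstacle. The only points needing care are the dimension argument, which reduces self-duality to self-orthogonality, and the surjectivity of the norm, which is what lets us recover $\bm v$ from $\bm x$. Distinctness of the $\alpha_l$ is used only to guarantee that $C$ is a genuine $[n,k]$ GRS code of dimension $k$.
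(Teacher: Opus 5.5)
Your proof is correct and follows the same route as the paper. Both reduce Hermitian self-duality to the vanishing of $\sum_{l}\alpha_l^{i+jq}v_l^{q+1}$ for $0\le i,j\le k-1$ and set $x_l=v_l^{q+1}$; you additionally make explicit the two steps the paper leaves implicit, namely the dimension count that turns self-orthogonality into self-duality and the surjectivity of the norm $\mathbb{F}_{q^2}^*\to\mathbb{F}_q^*$ needed to recover $\bm v$ from $\bm x$.
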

\begin{proof}
 The code $GRS_{n,k}(\bm \alpha,\bm v)$ is a $q^2$-ary Hermitian self-dual GRS code if and only if 
 \begin{align*}
  \sum\limits_{l=1}^{n}(\alpha_{l}^{i}v_{l})(\alpha_{l}^{j}v_{l})^q=\sum\limits_{l=1}^{n}\alpha_{l}^{i+jq}v_{l}^{1+q}=0,0\leq i,j\leq k-1.
 \end{align*}
 That is equivalent to $$\sum\limits_{l=1}^{n}\alpha_{l}^{i+jq}x_{l}=0,0\leq i,j\leq k-1$$
 has a solution $\bm x=(x_{1},x_{2},\cdots,x_{n})\in (\mathbb{F}_{q}^{*})^{n}$, where $x_{i}=v_{i}^{1+q},1\leq i\leq n$.
\end{proof}

{\bf Remark 1:} When $\bm\alpha=(\alpha_{1},\alpha_{2},\cdots,\alpha_{n})\in  (\mathbb{F}_{q^{2}}^{*})^{n}$, Lemma 1 coincides with the Proposition 1 in \cite{2019yue}. In fact, this result also holds when  $\bm\alpha=(\alpha_{1},\alpha_{2},\cdots,\alpha_{n})\in  (\mathbb{F}_{q^{2}})^{n}$. For the  reader's convenience, we restate it here.

\begin{theorem}\label{linear independent 1}
    Let $\mathbb{F}_{q^{2}}$ be a finite field with $q^{2}$ elements, where  $q$ is a prime power.  Let $n=2k$ satisfy $n\leq q+k$. Let $\alpha_{1},\alpha_{2},\cdots,\alpha_{n}$ be distinct elements in $\mathbb{F}_{q^{2}}$, and let   $x_{1},x_{2},\cdots,x_{n}\in \mathbb{F}_{q^{2}}^{*}$. Define $\Delta_{i}=\sum\limits_{l=1}^{n}\alpha_{l}^{i}x_{l},i=0,1,\cdots$. If $\Delta_{i+jq}=0$ for $0\leq i,j\leq k-1$, then the vectors $$\beta_{0},\beta_{1},\cdots,\beta_{k-1}$$ are linearly independent, where $\beta_{l}=(\Delta_{lq+k},\Delta_{lq+k+1},\cdots,\Delta_{lq+n-1} )$ for $l=0,1,\cdots,k-1$.
\end{theorem}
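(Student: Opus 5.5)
The plan is to argue by contradiction through a single polynomial identity, reducing everything to the invertibility of the Vandermonde matrix $V_{n}(\bm\alpha)$. Suppose $\sum_{l=0}^{k-1}\lambda_{l}\beta_{l}=0$ for some $\lambda_{0},\ldots,\lambda_{k-1}\in\mathbb{F}_{q^{2}}$. I will show that all $\lambda_{l}$ must vanish.

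First, combine the shifted power sums. For every $i\ge 0$,
$$\sum_{l=0}^{k-1}\lambda_{l}\Delta_{lq+i}=\sum_{t=1}^{n}\alpha_{t}^{i}\,x_{t}\,g(\alpha_{t}),\qquad g(y)=\sum_{l=0}^{k-1}\lambda_{l}y^{lq}.$$
Since $\mu\mapsto\mu^{q}$ is a bijection of $\mathbb{F}_{q^{2}}$, we can write $\lambda_{l}=\mu_{l}^{q}$. Then $g(y)=h(y)^{q}$ with $h(y)=\sum_{l=0}^{k-1}\mu_{l}y^{l}$, a polynomial of degree at most $k-1$, and $h=0$ if and only if all $\lambda_{l}=0$.

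Second, show that the combined sums vanish for every $i$ in the range $0\le i\le n-1$. Put $y_{t}=x_{t}h(\alpha_{t})^{q}$.
\begin{itemize}
\item[(a)] For $0\le i\le k-1$, the hypothesis $\Delta_{i+lq}=0$ gives $\sum_{t}\alpha_{t}^{i}y_{t}=\sum_{l}\lambda_{l}\Delta_{lq+i}=0$.
\item[(b)] For $k\le i\le 2k-1=n-1$, the $(i-k)$-th coordinate of $\sum_{l}\lambda_{l}\beta_{l}=0$ states exactly that $\sum_{l}\lambda_{l}\Delta_{lq+i}=0$.
\end{itemize}
Hence $V_{n}(\bm\alpha)\,(y_{1},\ldots,y_{n})^{T}=0$.

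Third, conclude. Because $\alpha_{1},\ldots,\alpha_{n}$ are distinct, $V_{n}(\bm\alpha)$ is invertible, so $y_{t}=0$ for every $t$. Since each $x_{t}\neq 0$, this forces $h(\alpha_{t})^{q}=0$, hence $h(\alpha_{t})=0$, for all $n$ distinct points. As $\deg h\le k-1<n$, $h$ is the zero polynomial. Therefore all $\mu_{l}=0$, so all $\lambda_{l}=0$, and the $\beta_{l}$ are linearly independent.

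The only delicate step is the rewriting $g=h^{q}$ through the Frobenius map, which turns the sparse exponents $lq$ into a genuine low-degree polynomial $h$. Once that is in place, the rest is the standard Vandermonde argument. I do not expect the argument to need the hypothesis $n\le q+k$, nor Corollary~\ref{Statement trasaction}. If a hidden dependence appears, for instance through exponent reduction modulo $q^{2}-1$ when some $\alpha_{t}=0$ or when $lq+i$ wraps around, I would check that step first. Even then, no reduction is actually needed, because the argument only uses the identity $\alpha^{lq+i}=\alpha^{i}(\alpha^{l})^{q}$, which holds for all $\alpha$, including $0$.
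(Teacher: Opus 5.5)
Your proof is correct and is essentially the paper's argument. The paper extends to $\widetilde\beta_l=(\Delta_{lq},\ldots,\Delta_{lq+n-1})$, whose first $k$ entries vanish. It sets $f(x)=\sum_l b_l x^l$ and uses the state-transition matrix $T$ with its Vandermonde eigenvectors to turn a dependence into $V_n(\bm\alpha)\,\mathrm{diag}\bigl(f(\alpha_1^{q}),\ldots,f(\alpha_n^{q})\bigr)(x_1,\ldots,x_n)^{T}=0$, then concludes by Vandermonde invertibility, $x_t\neq 0$, and root counting at the distinct points $\alpha_t^{q}$. You reach the same identity directly from the power sums without $T$ (your $g(\alpha_t)$ is the paper's $f(\alpha_t^{q})$, and $g=h^{q}$ is an equivalent way to do the root count), and you are right that $n\le q+k$ is never used, in your proof or the paper's.
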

\begin{proof}
Let $\Delta_{i}=\sum\limits_{l=1}^{n}\alpha_{l}^{i}x_{l},i=0,1,\cdots$. Then $\Delta_{i+jq}=0$ for $0\leq i,j\leq k-1$. Let $G(x)=\prod\limits_{i=1}^{n}(x-\alpha_{i})=x^{n}-\sum\limits_{i=0}^{n-1}c_{i}x^{i}$ and $T$ is defined as in Eq. (\ref{statement transaction matrix}). By Corollary \ref{Statement trasaction}, the sequence $S=\left(\Delta_{0},\Delta_{1},\Delta_{2},\cdots\right)$ can be viewed as a linear feedback shift register sequence with $(\Delta_{0},\Delta_{1},\cdots,\Delta_{n-1})$ as the initial state and $T$ as the state transition matrix.  Let $\widetilde{\beta}_{l}=(\Delta_{lq},\Delta_{lq+1}\cdots,\Delta_{lq+k-1},\Delta_{lq+k},\Delta_{lq+k+1},\cdots,\Delta_{lq+n-1} )$ for $l=0,1,\cdots,k-1$.
If there exists $b_{0},b_{1},\cdots,b_{k-1}\in \mathbb{F}_{q^{2}}$ such that 
\begin{equation}\label{linear combination 1}
 \sum\limits_{i=0}^{k-1}b_{i}\widetilde{\beta}_{i}=0 .  
\end{equation}
By Corollary \ref{Statement trasaction}, we know that 
$\widetilde{\beta}_{i}^{T}=T^{iq}\widetilde{\beta}_{0}^{T}$ for $i=0,1,\cdots,k-1$.
Then $$\sum\limits_{i=0}^{k-1}b_{i}\widetilde{\beta}_{i}^{T}=\sum\limits_{i=0}^{k-1}b_{i}(T^{q})^{i}\widetilde{\beta}_{0}^{T}=0.$$
Let $f(x)=b_{k-1}x^{k-1}+\cdots+b_{1}x+b_{0}$, by Theorem \ref{State transition matrix}, then $f(\alpha_{i}^{q})$ $i=1,2,\cdots,n$ are all the eigenvalues of $\sum\limits_{i=0}^{k-1}b_{i}(T^{q})^{i}=f(T^{q})$. And $(1,\alpha_{i},\alpha_{i}^{2},\cdots,\alpha_{i}^{n-1})^{T}$
is the eigenvector of matrix $f(T^{q})$ belonging to eigenvalue $f(\alpha_{i}^{q})$ for $i=1,2,\cdots,n$.
    
    Then \begin{align*}
     \sum\limits_{i=0}^{k-1}b_{i}(T^{q})^{i}\widetilde{\beta}_{0}^{T}&=f(T^{q})\left(\begin{array}{c}
   \Delta_{0}\\
    \Delta_{1}\\
     \vdots\\
      \Delta_{n-1}\end{array}\right)\\
      &=f(T^{q})\left(\begin{array}{cccc}
                1                & 1                & \cdots & 1                \\
                \alpha_{1}       & \alpha_{2}       & \cdots & \alpha_{n}       \\
                \vdots           & \vdots           & \ddots & \vdots           \\
                \alpha_{1}^{n-1} & \alpha_{2}^{n-1} & \cdots & \alpha_{n}^{n-1}
            \end{array}
        \right)\left(\begin{array}{c}
   x_{1}\\
    x_{2}\\
     \vdots\\
      x_{n}\end{array}\right)\\
      &=\left(\begin{array}{cccc}
                1                & 1                & \cdots & 1                \\
                \alpha_{1}       & \alpha_{2}       & \cdots & \alpha_{n}       \\
                \vdots           & \vdots           & \ddots & \vdots           \\
                \alpha_{1}^{n-1} & \alpha_{2}^{n-1} & \cdots & \alpha_{n}^{n-1}
            \end{array}
        \right)\left(\begin{array}{cccc}
        f( \alpha_{1}^{q})&&&\\
        &f( \alpha_{2}^{q})&&\\
        &&\ddots&\\
        &&&f( \alpha_{n}^{q})\end{array}
        \right)\left(\begin{array}{c}
   x_{1}\\
    x_{2}\\
     \vdots\\
      x_{n}\end{array}\right)\\
      &=\left(\begin{array}{cccc}
                1                & 1                & \cdots & 1                \\
                \alpha_{1}       & \alpha_{2}       & \cdots & \alpha_{n}       \\
                \vdots           & \vdots           & \ddots & \vdots           \\
                \alpha_{1}^{n-1} & \alpha_{2}^{n-1} & \cdots & \alpha_{n}^{n-1}
            \end{array}
        \right)\left(\begin{array}{c}
    f( \alpha_{1}^{q})x_{1}\\
    f( \alpha_{2}^{q})x_{2}\\
     \vdots\\
      f( \alpha_{n}^{q})x_{n}\end{array}\right)=\left(\begin{array}{c}
  0\\
   0\\
     \vdots\\
     0\end{array}\right).
    \end{align*}
    So $ f( \alpha_{i}^{q})x_{i}=0$ and $x_{i}\neq0$, then  $ f( \alpha_{i}^{q})=0$ for $i=1,2,\cdots,n$. Thus, the polynomial $f(x)=b_{k-1}x^{k-1}+\cdots+b_{1}x+b_{0}$ has $n$ distinct roots $\alpha_{1}^{q},\alpha_{2}^{q},\cdots,\alpha_{n}^{q}$, where $k<n$. Since a non-zero polynomial of degree $k-1$ can have at most $k-1$ distinct roots, it follows that $f(x)=0$, and hence  $b_{0}=b_{1}=\cdots=b_{k-1}=0$. Combining this with Eq. (\ref{linear combination 1}), we conclude that  vectors  $\widetilde{\beta}_{0},\widetilde{\beta}_{1},\cdots,\widetilde{\beta}_{k-1}$ are linearly independent. Moreover, since the first $k$ components of each $\widetilde{\beta}_{l}$ are all zero, it follows that 
    $\beta_{0},\beta_{1},\cdots,\beta_{k-1}$ are also linearly independent.
\end{proof}

\begin{theorem}\label{n,k,q}
   Let $\mathbb{F}_{q^{2}}$ be a finite field with $q^{2}$ elements, where $q$ is a prime power. Let $n=2k$, and let $\alpha_{1},\alpha_{2},\cdots,\alpha_{n}$ be distinct elements in $\mathbb{F}_{q^{2}}$.  Let $\bm v=(v_{1},v_{2},\cdots,v_{n})\in(\mathbb{F}_{q^{2}}^{*})^{n}$. If the generalized Reed-Solomon code $GRS_{n,k}({\bm \alpha},{\bm v})$ is Hermitian self-dual, then  $k\leq q-1$.   
\end{theorem}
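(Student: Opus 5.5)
The plan is to argue by contradiction, using Lemma \ref{nonzero solution} to turn self-duality into vanishing power sums, and then showing that if $k\geq q$ these vanishing conditions already cover a full block of $n$ consecutive power sums. That forces the weight vector to be zero.

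\textbf{Step 1 (reduction).} Suppose $GRS_{n,k}(\bm\alpha,\bm v)$ is Hermitian self-dual. By Lemma \ref{nonzero solution}, the vector $\bm x=(x_{1},\dots,x_{n})$ with $x_{l}=v_{l}^{q+1}\in\mathbb{F}_{q}^{*}$ satisfies
$$\Delta_{i+jq}=\sum_{l=1}^{n}\alpha_{l}^{i+jq}x_{l}=0,\qquad 0\leq i,j\leq k-1 .$$

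\textbf{Step 2 (covering).} Assume, for contradiction, that $k\geq q$. Any integer $t$ with $0\leq t\leq (k-1)q+q-1$ can be written as $t=i+jq$ with $0\leq i\leq q-1\leq k-1$ and $0\leq j\leq k-1$. Hence $\Delta_{t}=0$ for all $0\leq t\leq kq-1$. Since $k\geq q\geq 2$, we have $kq-1\geq 2k-1=n-1$. Therefore $\Delta_{0}=\Delta_{1}=\cdots=\Delta_{n-1}=0$.

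\textbf{Step 3 (Vandermonde).} These $n$ equations say exactly that $V_{n}(\bm\alpha)\,\bm x^{T}=0$, with $V_{n}(\bm\alpha)$ as in Eq. (\ref{v matirx}). The $\alpha_{l}$ are distinct, so $V_{n}(\bm\alpha)$ is invertible and $\bm x=0$. This contradicts $x_{l}=v_{l}^{q+1}\neq 0$. Hence $k\leq q-1$.

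The only step that needs care is the covering argument in Step 2. We need the range of $i$ to contain at least $q$ consecutive values, which is exactly where $k\geq q$ is used, and we need the resulting interval $[0,kq-1]$ to reach $n-1$. Both checks are elementary, so I expect no real obstacle. The argument does not need the LFSR machinery of Theorem \ref{linear independent 1}, although one could alternatively phrase Step 3 through Corollary \ref{Statement trasaction}: the whole sequence $(\Delta_{t})$ would then vanish.
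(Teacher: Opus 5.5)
Your proof is correct and follows essentially the same route as the paper. Lemma~\ref{nonzero solution} gives $\Delta_{i+jq}=0$; the assumption $k\geq q$ then forces $\Delta_0=\cdots=\Delta_{n-1}=0$; and invertibility of the Vandermonde matrix contradicts $x_l\neq 0$. Your Step 2 also makes explicit the check $kq-1\geq n-1$ (which uses $q\geq 2$), something the paper leaves implicit.
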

\begin{proof}
Since $GRS_{n,k}({\bm \alpha},{\bm v})$ is Hermitian self-dual, by Lemma \ref{nonzero solution}, there exists a vector
$\bm{x}=(x_{1},x_{2},\cdots,x_{n})\in (\mathbb{F}_{q}^{*})^{n}$ such that  $\sum\limits_{l=1}^{n}\alpha_{l}^{i+jq}x_{l}=0,0\leq i,j\leq k-1$.
Define $\Delta_{i}=\sum\limits_{l=1}^{n}\alpha_{l}^{i}x_{l},i=0,1,\cdots$. Then $\Delta_{i+jq}=0$ for all $0\leq i,j\leq k-1$. 

Suppose $k \geq q$. Then for any $0\leq i_{0}\leq n-1$, there exist indices $0\leq l_{i_{0}}\leq k-1,0\leq l_{0}\leq k-1$ such that $\Delta_{i_{0}}=\Delta_{l_{i_{0}}q+l_{0}}$. It follows that $(\Delta_{0},\Delta_{1},\cdots,\Delta_{n-1})=(0,0,\cdots,0)$, which contradicts

$$\left(\begin{array}{c}
   \Delta_{0}\\
    \Delta_{1}\\
     \vdots\\
      \Delta_{n-1}\end{array}\right)=\left(\begin{array}{cccc}
                1                & 1                & \cdots & 1                \\
                \alpha_{1}       & \alpha_{2}       & \cdots & \alpha_{n}       \\
                \vdots           & \vdots           & \ddots & \vdots           \\
                \alpha_{1}^{n-1} & \alpha_{2}^{n-1} & \cdots & \alpha_{n}^{n-1}
            \end{array}
        \right)\left(\begin{array}{c}
   x_{1}\\
    x_{2}\\
     \vdots\\
      x_{n}\end{array}\right),$$
      where $\bm x=(x_{1},x_{2},\cdots,x_{n})\in (\mathbb{F}_{q}^{*})^{n}$. So $k\leq q-1$.
    
\end{proof}

The following conclusion has already been established. For the coherence and readability of this paper, we restate it here and provide a new method to prove it.
\begin{theorem}
   Let $\mathbb{F}_{q^{2}}$ be a finite field with $q^{2}$ elements, where $q$ is a prime power. Let $n=2k$, and let $\alpha_{1},\alpha_{2},\cdots,\alpha_{n}$ be distinct elements in $\mathbb{F}_{q^{2}}$.  Let $\bm{v}=(v_{1},v_{2},\cdots,v_{n})\in(\mathbb{F}_{q^{2}}^{*})^{n}$. If the generalized Reed-Solomon code $GRS_{n,k}({\bm \alpha},{\bm v})$ is Hermitian self-dual, then  $n\leq q+1$.    
\end{theorem}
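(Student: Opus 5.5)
The plan is to argue by contradiction: assume $n=2k\geq q+2$, and obtain too many linearly independent vectors supported on too few coordinates, using Theorem \ref{linear independent 1}.

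First, since $GRS_{n,k}(\bm\alpha,\bm v)$ is Hermitian self-dual, Lemma \ref{nonzero solution} gives $\bm x=(x_1,\dots,x_n)\in(\mathbb{F}_q^*)^n$ with $\Delta_{i+jq}=0$ for $0\le i,j\le k-1$, where $\Delta_i=\sum_{l=1}^n\alpha_l^ix_l$. By Theorem \ref{n,k,q} we have $k\le q-1$. Hence $n=2k\le q+k$, so the hypothesis of Theorem \ref{linear independent 1} holds (the $x_l$ are nonzero). We conclude that $\beta_0,\dots,\beta_{k-1}$ are linearly independent, where
\[
\beta_l=(\Delta_{lq+k},\dots,\Delta_{lq+2k-1}) .
\]

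Next, look at the coordinates of $\beta_l$. Its entry in position $t$ ($0\le t\le k-1$) is $\Delta_{lq+k+t}$. If $k+t\ge q$, write
\[
lq+k+t=(l+1)q+(k+t-q).
\]
Here $0\le k+t-q\le 2k-1-q\le k-1$, the last step because $k\le q$. So whenever $l+1\le k-1$, this entry has the form $\Delta_{i+jq}$ with $0\le i,j\le k-1$ and therefore vanishes. Thus for $0\le l\le k-2$, every entry of $\beta_l$ in positions $t\ge q-k$ is zero. Note that $q-k\ge 1$ by Theorem \ref{n,k,q}. Consequently the $k-1$ vectors $\beta_0,\dots,\beta_{k-2}$ all lie in the subspace of $\mathbb{F}_{q^2}^k$ supported on the first $q-k$ coordinates, which has dimension $q-k$.

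Since these $k-1$ vectors are linearly independent (they form a subset of an independent set), we get $k-1\le q-k$, i.e. $n=2k\le q+1$. This contradicts the assumption $n\ge q+2$, or equivalently proves the claim directly. The main point requiring care is the index bookkeeping: we need $k+t-q\le k-1$ (which uses $k\le q$) and $l+1\le k-1$. That is exactly why the last vector $\beta_{k-1}$ has to be discarded.
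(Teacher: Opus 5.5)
Your proof is correct and follows essentially the paper's argument: apply Theorem~\ref{linear independent 1}, note that for $l\le k-2$ every entry of $\beta_l$ beyond the first $q-k$ positions has the form $\Delta_{i+jq}$ with $0\le i,j\le k-1$ and so vanishes, then compare the rank $k-1$ with the $q-k$ remaining columns to get $k-1\le q-k$. The only difference is organizational. You argue by contradiction from $n\ge q+2$, which guarantees $q-k<k$ so that the subspace really has dimension $q-k$. The paper instead splits into the cases $n\le q$ and $n>q$.
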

 \begin{proof}
 By Theorem \ref{n,k,q}, it follows that $n\leq q+k$. And by Theorem \ref{linear independent 1}, we know that the row vectors of the following the matrix 
  $$A( \Delta)=\left(\begin{array}{cccc}
    \Delta_{k}&\Delta_{k+1}&\cdots&\Delta_{n-1}\\
    \Delta_{q+k}&\Delta_{q+k+1}&\cdots&\Delta_{q+n-1}\\
   \vdots &\vdots&\ddots&\vdots\\
   \Delta_{(k-2)q+k}&\Delta_{(k-2)q+k+1}&\cdots&\Delta_{(k-2)q+n-1}
   \end{array}
   \right)$$
   are linearly independent. Let $ \widetilde A( \Delta)=\left(\begin{array}{cccc}
    \Delta_{k}&\Delta_{k+1}&\cdots&\Delta_{q-1}\\
    \Delta_{q+k}&\Delta_{q+k+1}&\cdots&\Delta_{q+q-1}\\
   \vdots &\vdots&\ddots&\vdots\\
   \Delta_{(k-2)q+k}&\Delta_{(k-2)q+k+1}&\cdots&\Delta_{(k-2)q+q-1}
   \end{array}
   \right)$. We prove that the row vectors of $\widetilde A( \Delta)$ are also linearly independent .
   
   \begin{itemize}
       \item [(1)] If $n\leq q$, then 
        \begin{align*}
        \widetilde A( \Delta)&=\left(\begin{array}{cccccc}
    \Delta_{k}&\Delta_{k+1}&\cdots&\Delta_{n-1}&\cdots&\Delta_{q-1}\\
    \Delta_{q+k}&\Delta_{q+k+1}&\cdots&\Delta_{q+n-1}&\cdots&\Delta_{2q-1}\\
   \vdots &\vdots&\ddots&\vdots\\
   \Delta_{(k-2)q+k}&\Delta_{(k-2)q+k+1}&\cdots&\Delta_{(k-2)q+n-1}&\cdots&\Delta_{(k-1)q-1}
   \end{array}\right) .     
        \end{align*}  
   So the row vectors of matrix $\widetilde A( \Delta)$ are also linearly independent.\\
   
   \item[(2)] If $n>q$, then by Theorem \ref{n,k,q}, we have $q<n\leq q+k-1$. So $$(\Delta_{lq+k},\Delta_{lq+k+1},\cdots,\Delta_{lq+n-1})=(\Delta_{lq+k},\Delta_{lq+k+1},\cdots,\Delta_{lq+q-1},0,\cdots,0)$$
   for $l=0,1,\cdots,k-2$. So the matrix 
   \begin{align*}
         A( \Delta)&=\left(\begin{array}{cccc}
    \Delta_{k}&\Delta_{k+1}&\cdots&\Delta_{n-1}\\
    \Delta_{q+k}&\Delta_{q+k+1}&\cdots&\Delta_{q+n-1}\\
   \vdots &\vdots&\ddots&\vdots\\
   \Delta_{(k-2)q+k}&\Delta_{(k-2)q+k+1}&\cdots&\Delta_{(k-2)q+n-1}
   \end{array}
   \right)\\
   \\
   &=\left(\begin{array}{ccccccc}
    \Delta_{k}&\Delta_{k+1}&\cdots&\Delta_{q-1}&0&\cdots&0\\
    \Delta_{q+k}&\Delta_{q+k+1}&\cdots&\Delta_{2q-1}&0&\cdots&0\\
   \vdots &\vdots&&\vdots&\vdots&&\vdots\\
   \Delta_{(k-2)q+k}&\Delta_{(k-2)q+k+1}&\cdots&\Delta_{(k-1)q-1}&0&\cdots&0
   \end{array}\right)  \\
   &=\left(\begin{array}{cc}
        \widetilde A( \Delta)&O  \\
   \end{array}\right).
        \end{align*}  
   \end{itemize}
   Thus, the row vectors of $\widetilde A( \Delta)$ are also linearly independent.
  Therefore, the rank of $\widetilde A( \Delta)$ satisfies $R(\widetilde A( \Delta))=k-1$, which can not exceed the number of columns, i.e. $k-1\leq q-k$. It follows that $2k\leq q+1$, which implies $n\leq q+1$. 
\end{proof}

{\bf Remark 2:} This result shows that   when $n>q+1$, there are no $q^2$-ary Hermitian self-dual GRS codes of length $n$. The result has been obtained via Magma programming in \cite{2021li} and has been theoretically proven in \cite{2025zhu} as a special case of $e-$Galios self-dual for $e=\frac{m}{2}$.  In this paper, we present a new proof method and easy to follow.

\begin{theorem}\label{linear independent 2}
    Let $\mathbb{F}_{q^{2}}$ be a finite field with $q^{2}$ elements, where $q$ is a prime power. Let $n=2k$ be an even integer, and let $\alpha_{1},\alpha_{2},\cdots,\alpha_{n}$ be distinct elements in $\mathbb{F}_{q^{2}}$.  Let $\bm{v}=(v_{1},v_{2},\cdots,v_{n})\in(\mathbb{F}_{q^{2}}^{*})^{n}$. If the generalized Reed-Solomon code $GRS_{n,k}({\bm \alpha},{\bm v})$ is Hermitian self-dual, then one of the  following holds:
    \begin{itemize}
      \item[(i)] $n=q+1$, and there exist $a\in \mathbb{F}_{q^{2}}$ and $b\in \mathbb{F}_{q}^{*}$ such that $\alpha_{1},\alpha_{2},\cdots,\alpha_{n}$ are the roots of $(x+a)^{q+1}=b$.
        \item [(ii)] $n\leq q$, and there exist $a\in \mathbb{F}_{q^{2}}$ and $b\in \mathbb{F}_{q}^{*}$ such that $\alpha_{1},\alpha_{2},\cdots,\alpha_{n}$ are the roots of $(x+a)^{q+1}=b$ or $x^{q}=ax+b$.\\

    \end{itemize}  
\end{theorem}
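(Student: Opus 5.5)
The plan is to turn Hermitian self-duality into a statement that the Frobenius map $\alpha\mapsto\alpha^{q}$ agrees with a low-degree rational function on the evaluation set, and then to force that function to be a Möbius transformation.

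\textbf{Step 1 (Frobenius as a rational function).} Let $C=GRS_{n,k}(\bm\alpha,\bm v)$ and write $C^{q}$ for the coordinatewise $q$-th power of $C$. Hermitian self-duality means $C^{q}=C^{\perp E}$, which is the content of Lemma \ref{nonzero solution}. Three facts then combine:
\begin{itemize}
\item $C^{q}=GRS_{n,k}(\bm\alpha^{q},\bm v^{q})$;
\item the Euclidean dual of a GRS code is $GRS_{n,k}(\bm\alpha,\bm w)$ for some $\bm w\in(\mathbb{F}_{q^2}^{*})^{n}$, with the same evaluation points;
\item hence $GRS_{n,k}(\bm\alpha^{q},\bm v^{q})=GRS_{n,k}(\bm\alpha,\bm w)$.
\end{itemize}
Consequently, for each $0\le j\le k-1$ there is $g_j\in\mathbb{F}_{q^2}[x]_k$ with $v_l^{q}\alpha_l^{jq}=w_l g_j(\alpha_l)$ for all $l$. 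Put $g=g_0$ and $f=g_1$. Since $v_l^q\neq0$, $g$ has no zero among the $\alpha_l$, and dividing gives $\alpha_l^{q}=f(\alpha_l)/g(\alpha_l)$ for every $l$. Comparing $j=2$ with $j=0,1$ gives $g_2(\alpha_l)g(\alpha_l)=f(\alpha_l)^2$. Both sides have degree at most $2k-2<n$ and agree at $n$ points, so $g_2g=f^{2}$ as polynomials.

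\textbf{Step 2 (degree reduction; the main obstacle).} The goal is to show that $f/g$ reduces to $(a'x+b')/(c'x+d')$. Once that holds, every $\alpha_l$ is a root of $c'x^{q+1}+d'x^{q}-a'x-b'$.

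My first attempt is to write $f=hF$ and $g=hG$ with $\gcd(F,G)=1$. Then $g_2g=f^2$ becomes $g_2G=hF^{2}$, so $G\mid h$. Iterating the identities $g_jg^{j-1}=f^{j}$ would bound $\deg F$ and $\deg G$. However, the degree bound $j(k-1)<n$ fails for $j\ge3$, so that iteration only goes through for small $j$.

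If that is insufficient, I would exploit the involution $\alpha^{q^2}=\alpha$. This gives $\psi^{(q)}(\psi(\alpha_l))=\alpha_l$, where $\psi=f/g$ and $\psi^{(q)}$ means Frobenius applied to the coefficients of $\psi$. A rational function of degree $d$ with this property on $n=2k>d$ points should be forced to have degree $1$. I would also feed in the LFSR viewpoint of Theorem \ref{linear independent 1}: the vanishing pattern $\Delta_{i+jq}=0$ says the sequence $(\Delta_i)$ is annihilated by a short polynomial in $T^{q}$ and $T$. Concretely, I would look for a nonzero element of $\mathrm{span}\{T^{q+1},T^{q},T,I\}$ killing $\widetilde\beta_0^{T}$. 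By the eigenvector computation of Theorem \ref{linear independent 1}, such an element produces a polynomial in $\mathrm{span}\{x^{q+1},x^{q},x,1\}$ vanishing at every $\alpha_l$.

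\textbf{Step 3 (normalization).} Suppose the $\alpha_l$ are roots of $c'x^{q+1}+d'x^{q}-a'x-b'$.

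\emph{Case $c'\neq0$.} Scale so that $c'=1$ and set $a=d'$. Conjugating the relation $\alpha_l^{q}=(a'\alpha_l+b')/(\alpha_l+d')$ by Frobenius and composing must return $\alpha_l$. Evaluated at $n\ge 4$ points this forces $a'=-a^{q}$, so the equation becomes $(x+a)^{q+1}=b$. The same consistency forces $b=a^{q+1}+b'\in\mathbb{F}_q$. Finally $b\neq0$, since otherwise all $\alpha_l$ would equal $-a$. This gives the $(x+a)^{q+1}=b$ alternative; that equation has exactly $q+1$ roots, which is compatible with both $n=q+1$ and $n\le q$.

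\emph{Case $c'=0$.} Then $d'\neq0$, and dividing gives $x^{q}=ax+b$. This polynomial has at most $q$ roots, so $n\le q$, which is case (ii).

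\textbf{Step 4 (case split).} Combining with the previous theorem ($n\le q+1$): if $n=q+1$, only the first case of Step 3 is possible, giving (i); if $n\le q$, either case may occur, giving (ii).
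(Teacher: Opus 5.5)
\textbf{There is a genuine gap in Step 2.} That step is not a technicality; it is the entire content of the theorem. You never prove that $\psi=f/g$ is a M\"obius map, i.e.\ that $\bm\alpha^{q+1},\bm\alpha^{q},\bm\alpha,\bm\alpha^{0}$ are linearly dependent. Steps 3--4 are routine once that is known, but none of the routes you sketch reaches it.
\begin{itemize}
\item \emph{The gcd route.} Writing $f=hF$, $g=hG$ as you do, $g_2G=hF^2$ forces $h=Gh'$, so $g=G^2h'$ and $g_2=h'F^2$. This gives only $d=\max(\deg F,\deg G)\le (k-1)/2$. For $j\ge 3$ the relations $g_jg^{j-1}\equiv f^j$ hold only modulo $\prod_l(x-\alpha_l)$.
\item \emph{The involution route.} It carries no information: \emph{any} $\psi$ with $\psi(\alpha_l)=\alpha_l^q$ satisfies $\psi^{(q)}(\psi(\alpha_l))=\psi(\alpha_l)^q=\alpha_l^{q^2}=\alpha_l$. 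Turning this into an identity of rational functions needs $d^2+1<n$, which $d\le (k-1)/2$ does not give for large $k$.
\end{itemize}
Your claim that the involution forces degree $1$ is false. Take $q=29$ and $\{\alpha_l\}=\mu_{24}\subset\mathbb{F}_{29^2}$, so $n=24$, $k=12$ and $\alpha^q=\alpha^5$. Then $f=x^5$, $g=1$, $g_2=x^{10}$ satisfy everything you derive: $d=5\le (k-1)/2$, $g_2g=f^2$, and $\psi^{(q)}\circ\psi=x^{25}=x$ on $\mu_{24}$. Yet $\mu_{24}$ lies on no Baer subline. Such a subline would be stabilized by $x\mapsto\zeta_{24}x$, which has order $24$, while its stabilizer is a conjugate of $\mathrm{PGL}(2,29)$, whose element orders divide $28$, $29$ or $30$. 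This configuration first violates the hypothesis at $j=3$, since $x^{15}$ has degree $>k-1$. So the vanishing for \emph{all} $j\le k-1$ must be used, and your argument never uses it.

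\textbf{How the gap can be closed.} Your LFSR remark names the right target: a nonzero element of $\mathrm{span}\{T^{q+1},T^q,T,I\}$ annihilating $\widetilde\beta_0^T$. Its existence is exactly what the paper proves, by a rank count.
\begin{itemize}
\item It writes $\bm\alpha^{q+t}=\sum_{i<n}k_{t,i}\bm\alpha^{i}$ for $t=0,1$ (or $\bm\alpha^{n}$ when $n=q+1$) and turns each into a recurrence for $(\Delta_i)$ by Theorem \ref{Annihilating polynomial}.
\item It reads the recurrence at $i=lq+r$, $0\le l,r\le k-2$. There the zeros $\Delta_{i+jq}=0$ kill the coefficients of $k_{t,0}$, $k_{t,1}$ and of $-1$.
\item Hence both vectors $(k_{t,2},\dots,k_{t,n-1})$ lie in the kernel of one $t$-independent $(n-3)\times(n-2)$ matrix.
\item That matrix has rank $n-3$. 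Theorem \ref{linear independent 1} supplies $k-1$ independent rows, and this is where all $j\le k-1$ enter. Shifted rows add an anti-triangular block with a nonzero $\Delta_{l_0q+k}$ (or, if all of these vanish, $\Delta_{j_0q+k+1}$) on the antidiagonal.
\item So the two vectors are proportional, giving $\lambda_0\bm\alpha^q+\lambda_1\bm\alpha^{q+1}\in\mathrm{span}(\bm\alpha^0,\bm\alpha)$.
\end{itemize}
Alternatively, you can finish directly from your Step 1 by comparing minimum-weight codewords of the MDS code $GRS_{n,k}(\bm\alpha^q,\bm v^q)=GRS_{n,k}(\bm\alpha,\bm w)$. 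Take $|S|=k-2$ and $a\neq b\notin S$. The codewords vanishing on $S\cup\{a\}$ and on $S\cup\{b\}$ are unique up to scalars. Dividing them in the two descriptions gives $y_l=c\,y_l^{q}$ with $y_l=(\alpha_l-\alpha_a)/(\alpha_l-\alpha_b)$. This puts all $\alpha_l$ with $l\notin S$ on one Baer subline, and two disjoint choices of $S$ glue these together.

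A minor point: $f=g_1$ needs $k\ge 2$ and $g_2$ needs $k\ge 3$. The case $k=2$ is automatic since $\deg f,\deg g\le 1$, but $n=2$ must be handled separately.
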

\begin{proof}
Since $GRS_{n,k}({\bm \alpha},{\bm v})$ is Hermitian self-dual, by Lemma \ref{nonzero solution}, there exists
$\bm{x}=(x_{1},x_{2},\cdots,x_{n})\in (\mathbb{F}_{q}^{*})^{n}$ such that  $\sum\limits_{l=1}^{n}\alpha_{l}^{i+jq}x_{l}=0,0\leq i,j\leq k-1$.
Let $\Delta_{i}=\sum\limits_{l=1}^{n}\alpha_{l}^{i}x_{l},i=0,1,\cdots$. Then $\Delta_{i+jq}=0$ for $0\leq i,j\leq k-1$. Let $G(x)=\prod\limits_{i=1}^{n}(x-\alpha_{i})=x^{n}-\sum\limits_{i=0}^{n-1}c_{i}x^{i}$ and $T$ is as defined in Eq. (\ref{statement transaction matrix}). By Corollary \ref{Statement trasaction}, then the sequence $S=\left(\Delta_{0},\Delta_{1},\Delta_{2},\cdots\right)^{T}$ can be viewed as a linear feedback shift register sequence with $(\Delta_{0},\Delta_{1},\cdots,\Delta_{n-1})^{T}$ as the initial state and $T$ as the state transition matrix. Denote $L^{(i)}(S)=\left(\Delta_{i},\Delta_{i+1},\Delta_{i+2},\cdots\right)^{T}$ and $L^{(i)}(S)_{[j]}=\left(\Delta_{i},\Delta_{i+1},\Delta_{i+2},\cdots\Delta_{i+j-1}\right)^{T}$ for any positive integers $i,j\geq1$. For the linear space $(\mathbb{F}_{q^{2}})^{n}$ over $\mathbb{F}_{q^{2}}$, the vectors $\left(1,1,\cdots,1\right),\left(\alpha_{1},\alpha_{2},\cdots,\alpha_{n}\right),\cdots,\left(\alpha_{1}^{n-1},\alpha_{2}^{n-1},\cdots,\alpha_{n}^{n-1}\right)$ can be viewed as a bases.
\begin{itemize}
  \item[(i)] If $n=q+1$, then 
for vector $\left(\alpha_{1}^{n},\alpha_{2}^{n},\cdots,\alpha_{n}^{n}\right)\in (\mathbb{F}_{q^{2}})^{n}$, there exists $k_{0},k_{1},\cdots,k_{n-1}\in \mathbb{F}_{q^{2}}$ such that 
$$\alpha_{j}^{n}=\sum\limits_{i=0}^{n-1}k_{i}\alpha_{j}^{i},j=1,2,\cdots,n.$$
So $\alpha_{1},\alpha_{2},\cdots,\alpha_{n}$ are the roots of $-x^{n}+k_{n-1}x^{n-1}+\cdots+k_{1}x+k_{0}=0$. By Theorem \ref{Annihilating polynomial}, we have  $$-\Delta_{n+i}+k_{n-1}\Delta_{n-1+i}+\cdots+k_{1}\Delta_{1+i}+k_{0}\Delta_{i}=0$$ for any non-negative integer $i$. So
$k_{0}S+k_{1}L^{(1)}(S)+\cdots+k_{n-1}L^{(n-1)}(S)-L^{(n)}(S)=\bm{0}$ and furthermore $$\left(k_{0}S+k_{1}L^{(1)}(S)+\cdots+k_{n-1}L^{(n-1)}(S)-L^{(n)}(S)\right)_{[(k-1)q]}=\bm{0}$$ also holds. Treating the sequence as column vectors, we obtain the following matrix product:
 $$\left( S_{[(k-1)q]},L^{(1)}(S)_{[(k-1)q]},\cdots,L^{(n)}(S)_{[(k-1)q]}\right)\left(\begin{array}{c}
k_{0}\\
k_{1}\\
\vdots\\
k_{n-1}\\
-1
\end{array}\right)=\bm{0}.$$
Define the matrix $M$ as $M=\left( S_{[(k-1)q]},L^{(1)}(S)_{[(k-1)q]},\cdots,L^{(n)}(S)_{[(k-1)q]}\right)=\left(\begin{array}{c}
A_{0}\\
B_{0}\\
A_{1}\\
B_{1}\\
\vdots\\
A_{k-2}\\
B_{k-2}\end{array}\right)$, 
where $$A_{l}=\left(\begin{array}{cccccccccc}
0&0&\cdots&0&0&\Delta_{lq+k}&\Delta_{lq+k+1}&\cdots\Delta_{(l+1)q-1}& 0&0\\
0&0&\cdots&0&\Delta_{lq+k}&\Delta_{lq+k+1}&\cdots\Delta_{(l+1)q-1}&0&0&0\\
\vdots&\vdots&&&&&&&&\vdots\\
0&0&\Delta_{lq+k}&\Delta_{lq+k+1}&\cdots\Delta_{(l+1)q-1}&0&\cdots&\cdots&\cdots&0\\
0&\Delta_{lq+k}&\Delta_{lq+k+1}&\cdots\Delta_{(l+1)q-1}&0&\cdots&\cdots&\cdots&0&\Delta_{(l+1)q+k}
\end{array}
\right)$$
is a $k\times (n+1)$ matrix for $l=0,1,\cdots,k-2$. 

Then $A_{l}\bm{X}=0$ for $l=0,1,\cdots,k-2$ where $\bm{X}=\left(k_{0},k_{1},\cdots,k_{n-1},-1\right)^{T}$.

The system of equations formed by the first $k-1$ equations of each $A_{l}\bm{X}=0$ can be written as $A\bm{Y}=0$ i.e. $$\left(\begin{array}{c}
A_{0}[0,1,\cdots,k-2;2,3,\cdots,n-2]\\
A_{1}[0,1,\cdots,k-2;2,3,\cdots,n-2]\\
\vdots\\
A_{k-2}[0,1,\cdots,k-2;2,3,\cdots,n-2]
\end{array}
\right)\left(\begin{array}{c}
k_{2}\\
k_{3}\\
\vdots\\
k_{n-2}
\end{array}
\right)=\bm{0}.$$
By Theorem \ref{linear independent 1}, the row vectors  $A_{0}[0;2,3,\cdots,n-2],A_{1}[0;2,3,\cdots,n-2],\cdots,A_{k-2}[0;2,3,\cdots,n-2]$ are also linearly independent. Because every $A_{l}[0;2,3,\cdots,n-2]=(0,\cdots,0,\Delta_{k},\Delta_{k+1},\cdots,\Delta_{q-1})$, $l=0,1,\cdots,k-2$. Thus, the diagram 
    $$\left|\begin{array}{cccc}
    \Delta_{k}&\Delta_{k+1}&\cdots&\Delta_{q-1}\\
    \Delta_{q+k}&\Delta_{q+k+1}&\cdots&\Delta_{2q-1}\\
   \vdots &\vdots&\ddots&\vdots\\
   \Delta_{(k-2)q+k}&\Delta_{(k-2)q+k+1}&\cdots&\Delta_{(k-1)q-1}
   \end{array}
   \right|\neq0.$$
%\left(\begin{array}{c}
%A_{0}[1,2,\cdots,k-1;3,4,\cdots,n-1]\\
%A_{1}[1,2,\cdots,k-1;3,4,\cdots,n-1]\\
%\vdots\\
%A_{k-1}[1,2,\cdots,k-1;3,4,\cdots,n-1]
%\end{array}
%\right)
So $\Delta_{k}, \Delta_{q+k},\cdots, \Delta_{(k-2)q+k}$ are not all zero. Suppose $\Delta_{l_{0}q+k}\neq0$, for some $0\leq l_{0}\leq k-2$, then 
$$A_{l_{0}}[0,\cdots,k-2;2,\cdots,n-2]=\left(\begin{array}{ccccccccc}
0&\cdots&0&\Delta_{l_{0}q+k}&\Delta_{l_{0}q+k+1}&\cdots\Delta_{(l_{0}+1)q-1}\\
0&\cdots&\Delta_{l_{0}q+k}&\Delta_{l_{0}q+k+1}&\cdots\Delta_{(l_{0}+1)q-1}&0\\
\vdots&&&&&\vdots\\
\Delta_{l_{0}q+k}&\Delta_{l_{0}q+k+1}&\cdots\Delta_{(l_{0}+1)q-1}&0&\cdots&0\\
\end{array}
\right)$$
has row rank $k-1$. So the $k-1$ row vectors of $A_{l_{0}}[0,1,\cdots,k-2;2,3,\cdots,n-2]$ are linearly independent on $\mathbb{F}_{q^{2}}$. Let
\begin{align*}
  B&=\left(\begin{array}{c}
      A_{0}[0;2,3,\cdots,n-2]\\
      A_{1}[0;2,3,\cdots,n-2]\\       
      \vdots\\
      A_{k-2}[0;2,3,\cdots,n-2]\\
      A_{l_{0}}[1,2,\cdots,k-2;2,3,\cdots,n-2]
  \end{array}\right)\\
  &=\left(\begin{array}{cccccc}
0&\cdots&0&\Delta_{k}&\Delta_{k+1}&\cdots\Delta_{q-1}\\
0&\cdots&0&\Delta_{q+k}&\Delta_{q+k+1}&\cdots\Delta_{2q-1}\\
\vdots&&\vdots&\vdots&\vdots&\vdots\\
0&\cdots&0&\Delta_{(k-2)q+k}&\Delta_{(k-2)q+k+1}&\cdots\Delta_{(k-1)q-1}\\
0&\cdots&\Delta_{l_{0}q+k}&\Delta_{l_{0}q+k+1}&\cdots\Delta_{(l_{0}+1)q-1}&0\\
\vdots&&&&&\vdots\\
\Delta_{l_{0}q+k}&\Delta_{l_{0}q+k+1}&\cdots\Delta_{(l_{0}+1)q-1}&0&\cdots&0\\
\end{array}
\right)  ,
\end{align*}

then $B$ satisfies $B\left(\begin{array}{c}
k_{2}\\
k_{3}\\
\vdots\\
k_{n-2}
\end{array}
\right)=0$ and 
\begin{align*}
  |B|=\pm\left|\begin{array}{cccc}
    \Delta_{k}&\Delta_{k+1}&\cdots&\Delta_{q-1}\\
    \Delta_{q+k}&\Delta_{q+k+1}&\cdots&\Delta_{2q-1}\\
   \vdots &\vdots&\ddots&\vdots\\
   \Delta_{(k-2)q+k}&\Delta_{(k-2)q+k+1}&\cdots&\Delta_{(k-1)q-1}
   \end{array}
   \right|\left|\begin{array}{cccc}
   0&\cdots&0&\Delta_{l_{0}q+k}\\
0&\cdots&\Delta_{l_{0}q+k}&\Delta_{l_{0}q+k+1}\\
\vdots&&\vdots\\
\Delta_{l_{0}q+k}&\Delta_{l_{0}q+k+1}&\cdots&\Delta_{(l_{0}+1)q-1}\\
\end{array}
\right|\neq0  .
\end{align*}
It follows that $k_{2}=k_{3}=\cdots=k_{n-2}=0$. Thus, $$\alpha_{j}^{n}=\sum\limits_{i=0}^{n-1}k_{i}\alpha_{j}^{i}=k_{n-1}\alpha_{j}^{n-1}+k_{1}\alpha_{j}+k_{0}$$ for $j=1,2,\cdots,n.$

For $n=q+1$, we have $\alpha_{j}^{q+1}-k_{n-1}\alpha_{j}^{q}-k_{1}\alpha_{j}-k_{0}=0$ which is equivalent to $\alpha_{j}^{q+1}-k_{n-1}^{q}\alpha_{j}-k_{1}^{q}\alpha_{j}^{q}-k_{0}^{q}=0$. Hence, $\alpha_{1},\alpha_{2},\cdots,\alpha_{n}$ are roots of both $x^{q+1}-k_{n-1}x^{q}-k_{1}x-k_{0}=0$ and $x^{q+1}-k_{n-1}^{q}x-k_{1}^{q}x^{q}-k_{0}^{q}=0$. By equating coefficients, we find $k_{1}^{q}=k_{n-1}$ and $k_{0}^{q}=k_{0}$, which implies $k_{0}\in\mathbb{F}_{q}$. Let $-k_{1}=a \in \mathbb{F}_{q^{2}}$ and $k_{1}^{q+1}+k_{0}=b$. Then 
$\alpha_{1},\alpha_{2},\cdots,\alpha_{n}$ are the roots of $$(x+a)^{q+1}=b.$$
 It follows that $b\in \mathbb{F}_{q}^{*}$. \\
\item[(ii)] If $n\leq q$,
then for vectors $\left(\alpha_{1}^{q+t},\alpha_{2}^{q+t},\cdots,\alpha_{n}^{q+t}\right)\in (\mathbb{F}_{q^{2}})^{n}$, where $t=0,1$, (and throughout the following, $t$ always satisfies $t=0$ or $t=1$), there exist coefficients $k_{t,0},k_{t,1},\cdots,k_{t,n-1}\in \mathbb{F}_{q^{2}}$ such that 
$$\alpha_{j}^{q+t}=\sum\limits_{i=0}^{n-1}k_{t,i}\alpha_{j}^{i},j=1,2,\cdots,n.$$
Thus, $\alpha_{1},\alpha_{2},\cdots,\alpha_{n}$ are the roots of $-x^{q+t}+k_{t,n-1}x^{n-1}+\cdots+k_{t,1}x+k_{t,0}=0$. By Theorem \ref{Annihilating polynomial}, we have  $$-\Delta_{q+t+i}+k_{t,n-1}\Delta_{n-1+i}+\cdots+k_{t,1}\Delta_{1+i}+k_{t,0}\Delta_{i}=0$$ for any non-negative integer $i$. It follows that
$k_{t,0}S+k_{t,1}L^{(1)}(S)+\cdots+k_{t,n-1}L^{(n-1)}(S)-L^{(q+t)}(S)=\bm{0}$ and hence $$\left(k_{t,0}S+k_{t,1}L^{(1)}(S)+\cdots+k_{t,n-1}L^{(n-1)}(S)-L^{(q+t)}(S)\right)_{[(k-1)q]}=0.$$ Treating the sequences as column vectors, we obtain the following matrix product:
 $$\left( S_{[(k-1)q]},L^{(1)}(S)_{[(k-1)q]},\cdots,L^{(q+t)}(S)_{[(k-1)q]}\right)\left(\begin{array}{c}
k_{t,0}\\
k_{t,1}\\
\vdots\\
k_{t,n-1}\\
0\\
\vdots\\
0\\
-1
\end{array}\right)=\bm{0}.$$ 
Define the matrix $M(t)$ by $M(t)=\left( S_{[(k-1)q]},L^{(1)}(S)_{[(k-1)q]},\cdots,L^{(q+t)}(S)_{[(k-1)q]}\right)=\left(\begin{array}{c}
A_{t,0}\\
B_{t,0}\\
A_{t,1}\\
B_{t,1}\\
\vdots\\
A_{t,k-2}\\
B_{t,k-2}\end{array}\right)$, 
where $$A_{1,l}=\left(\begin{array}{ccccccccccc}
0&0&\cdots&0&0&\Delta_{lq+k}&\Delta_{lq+k+1}&\cdots\Delta_{(l+1)q-1}& 0&0\\
0&0&\cdots&0&\Delta_{lq+k}&\Delta_{lq+k+1}&\cdots\Delta_{(l+1)q-1}&0&0& 0\\
\vdots&\vdots&&&&&&&\vdots& \vdots\\
0&0&\Delta_{lq+k}&\Delta_{lq+k+1}&\cdots\Delta_{(l+1)q-1}&0&\cdots&\cdots&0& 0\\
0&\Delta_{lq+k}&\Delta_{lq+k+1}&\cdots\Delta_{(l+1)q-1}&0&\cdots&\cdots&\cdots&0& *
\end{array}
\right)$$
is a $k\times (q+2)$ matrix and $A_{0,l}=A_{1,l}[0,1,\cdots,k-1;0,1,\cdots,q]$ is a $k\times (q+1)$ matrix for $l=0,1,\cdots,k-2$.

Then $A_{t,l}\bm{X(t)}=\bm{0}$ for $l=0,1,\cdots,k-2$, where $\bm{X(t)}=\left(k_{t,0},k_{t,1},\cdots,k_{t,n-1},0,
\cdots,0,-1\right)^{T}$.

The system of equations formed by the first $k-1$ equations of each $A_{t,l}\bm{X(t)}=\bm{0}$ can be written as $A(t)\bm{Y(t)}=\bm{0}$, i.e. 
\begin{equation}\label{equation system}
 \left(\begin{array}{c}
A_{t,0}[0,1,\cdots,k-2;2,3,\cdots,n-1]\\
A_{t,1}[0,1,\cdots,k-2;2,3,\cdots,n-1]\\
\vdots\\
A_{t,k-2}[0,1,\cdots,k-2;2,3,\cdots,n-1]
\end{array}
\right)\left(\begin{array}{c}
k_{t,2}\\
k_{t,3}\\
\vdots\\
k_{t,n-1}
\end{array}
\right)=\bm{0}.   
\end{equation}

By Theorem \ref{linear independent 1}, then the $k-1$ rows of matrix $A(t)$: 
\begin{equation}\label{linearly independent 3}
A_{t,0}[0;2,3,\cdots,n-1],A_{t,1}[0;2,3,\cdots,n-1],\cdots,A_{t,k-2}[0;2,3,\cdots,n-1]   
\end{equation}
 are linearly independent. Because every $A_{t,l}[0;2,3,\cdots,n-1]=(0,\cdots,0,\Delta_{lq+k},\Delta_{lq+k+1},\cdots,\Delta_{lq+n-1})$, $l=0,1,\cdots,k-2$. So the row vectors of the following matrix 
    $$M=\left(\begin{array}{cccc}
    \Delta_{k}&\Delta_{k+1}&\cdots&\Delta_{n-1}\\
    \Delta_{q+k}&\Delta_{q+k+1}&\cdots&\Delta_{q+n-1}\\
   \vdots &\vdots&\ddots&\vdots\\
   \Delta_{(k-2)q+k}&\Delta_{(k-2)q+k+1}&\cdots&\Delta_{(k-2)q+n-1}
   \end{array}
   \right)$$ 
 are also linearly independent. Therefore, the rank of $M$ is $R(M)=k-1$, and the number of colomuns of $M$ is $k$.

 If the first column of $M$ is not identically zero,  suppose $\Delta_{l_{0}q+k}\neq0$, for some $0\leq l_{0}\leq k-2$. Then 
 \begin{align*}
 & A_{t,l_{0}}[0,1,2,\cdots,k-2;2,3,\cdots,n-1]\\
 \ \ \\
  =
&\left(\begin{array}{ccccccccc}
0&\cdots&0&\Delta_{l_{0}q+k}&\Delta_{l_{0}q+k+1}&\cdots\Delta_{l_{0}q+n-1}\\
0&\cdots&\Delta_{l_{0}q+k}&\Delta_{l_{0}q+k+1}&\cdots\Delta_{l_{0}q+n-1}&0\\
\vdots&&&&&\vdots\\
\Delta_{l_{0}q+k}&\Delta_{l_{0}q+k+1}&\cdots\Delta_{l_{0}q+n-1}&0&\cdots&0\\
\end{array}
\right)   
 \end{align*}
has row rank $k-1$. So the $k-1$ row vectors of $A_{t,l_{0}}[0,1,\cdots,k-2;2,3,\cdots,n-1]$ are linearly independent on $\mathbb{F}_{q^{2}}$.
\begin{align*}
  B(t)&=\left(\begin{array}{c}
      A_{t,0}[0;2,3,\cdots,n-1]\\
      A_{t,1}[0;2,3,\cdots,n-1]\\       
      \vdots\\
      A_{t,k-2}[0;2,3,\cdots,n-1]\\
      A_{t,l_{0}}[1,2,\cdots,k-2;2,3,\cdots,n-1]
  \end{array}\right)\\
  &=\left(\begin{array}{cccccc}
0&\cdots&0&\Delta_{k}&\Delta_{k+1}&\cdots\Delta_{n-1}\\
0&\cdots&0&\Delta_{q+k}&\Delta_{q+k+1}&\cdots\Delta_{q+n-1}\\
\vdots&&\vdots&\vdots&\vdots&\vdots\\
0&\cdots&0&\Delta_{(k-2)q+k}&\Delta_{(k-2)q+k+1}&\cdots\Delta_{(k-2)q+n-1}\\
0&\cdots&\Delta_{l_{0}q+k}&\Delta_{l_{0}q+k+1}&\cdots\Delta_{l_{0}q+n-1}&0\\
\vdots&&&&&\vdots\\
\Delta_{l_{0}q+k}&\Delta_{l_{0}q+k+1}&\cdots\Delta_{l_{0}q+n-1}&0&\cdots&0\\
\end{array}
\right).  
\end{align*}
So $B(0)=B(1)$. Let us denote this common matrix by $B$.
Then $B$ satisfies $B\left(\begin{array}{c}
k_{t,2}\\
k_{t,3}\\
\vdots\\
k_{t,n-1}
\end{array}
\right)=\bm{0}$. Hence, both vectors $(k_{0,2},k_{0,3},\cdots,k_{0,n-1})^{T}$ and $(k_{1,2},k_{1,3},\cdots,k_{1,n-1})^{T}$ are solutions to $B\bm{X}=\bm{0}$.  Since
the rank of $B$ is $R(B)=k-1+k-2=n-3$, it follows that  $(k_{0,2},k_{0,3},\cdots,k_{0,n-1})^{T}$ and $(k_{1,2},k_{1,3},\cdots,k_{1,n-1})^{T}$ are linearly dependent.

If the first column of $M$ is identically  zero, then the diagram
 $$\left|\begin{array}{cccc}
    \Delta_{k+1}&\Delta_{k+2}&\cdots&\Delta_{n-1}\\
    \Delta_{q+k+1}&\Delta_{q+k+2}&\cdots&\Delta_{q+n-1}\\
   \vdots &\vdots&\ddots&\vdots\\
   \Delta_{(k-2)q+k+1}&\Delta_{(k-2)q+k+2}&\cdots&\Delta_{(k-2)q+n-1}
   \end{array}
   \right|\neq0.$$ 
   So there exists some  $\Delta_{j_{0}q+k+1}\neq0$ for some $0\leq j_{0}\leq k-2$.

   Take 
   \begin{align*}
  B_{1}(t)&=\left(\begin{array}{c}
      A_{t,0}[0;2,3,\cdots,n-1]\\
      A_{t,1}[0;2,3,\cdots,n-1]\\       
      \vdots\\
      A_{t,k-2}[0;2,3,\cdots,n-1]\\
      A_{t,j_{0}}[1,2,\cdots,k-2;2,3,\cdots,n-1]
  \end{array}\right)\\
  &=\left(\begin{array}{ccccccc}
0&0&\cdots&0&\Delta_{k+1}&\Delta_{k+2}&\cdots\Delta_{n-1}\\
0&0&\cdots&0&\Delta_{q+k+1}&\Delta_{q+k+2}&\cdots\Delta_{q+n-1}\\
\vdots&\vdots&&\vdots&\vdots&\vdots&\vdots\\
0&0&\cdots&0&\Delta_{(k-2)q+k+1}&\Delta_{(k-2)q+k+2}&\cdots\Delta_{(k-2)q+n-1}\\
0&0&\cdots&\Delta_{j_{0}q+k+1}&\Delta_{j_{0}q+k+2}&\cdots\Delta_{j_{0}q+n-1}&0\\
\vdots&\vdots&&&&&\vdots\\
0&\Delta_{j_{0}q+k+1}&\Delta_{j_{0}q+k+2}&\cdots\Delta_{j_{0}q+n-1}&0&\cdots&0\\
\end{array}
\right)  .
\end{align*}
Thus, we also have $B_{1}(0)=B_{1}(1)$. Let us denote this common matrix by $B_{1}$.
Then $B_{1}$ satisfies $B_{1}\left(\begin{array}{c}
k_{t,2}\\
k_{t,3}\\
\vdots\\
k_{t,n-1}
\end{array}
\right)=\bm{0}$. Hence, both vectors $(k_{0,2},k_{0,3},\cdots,k_{0,n-1})^{T}$ and $(k_{1,2},k_{1,3},\cdots,k_{1,n-1})^{T}$ are solutions to $B_{1}\bm{X}=\bm{0}$. Since the rank of $B_{1}$ is $R(B_{1})=k-1+k-2=n-3$, it follows that $(k_{0,2},k_{0,3},\cdots,k_{0,n-1})^{T}$ and $(k_{1,2},k_{1,3},\cdots,k_{1,n-1})^{T}$ are linearly dependent.

Therefore, there exist $\lambda_{0},\lambda_{1}\in \mathbb{F}_{q^{2}}$, not both zero, such that
\begin{equation}\label{q,q+1}
  \lambda_{0}(k_{0,2},k_{0,3},\cdots,k_{0,n-1})+\lambda_{1}(k_{1,2},k_{1,3},\cdots,k_{1,n-1})=\bm{0}. 
\end{equation}

Define$(\alpha_{1}^{i},\alpha_{2}^{i},\cdots,\alpha_{n}^{i})=\bm {\alpha}^{i}$, for $i=0,1,\cdots$, where $\bm{\alpha}^{0}$ represents $(1,1,\cdots,1)$. Then the vectors satisfy

$\bm{\alpha}^{q}= k_{0,0}\bm{\alpha}^{0}+k_{0,1}\bm{\alpha}^{1}+k_{0,2}\bm{\alpha}^{2}+\cdots+k_{0,n-1}\bm{\alpha}^{n-1}$ and $\bm{\alpha}^{q+1}= k_{1,0}\bm{\alpha}^{0}+k_{1,1}\bm{\alpha}^{1}+k_{1,2}\bm{\alpha}^{2}+\cdots+k_{1,n-1}\bm{\alpha}^{n-1}.$\\
Combining this with Eq.(\ref{q,q+1}), we have

\begin{align*}
\lambda_{0}\bm{\alpha}^{q}+\lambda_{1}\bm{\alpha}^{q+1}&= \lambda_{0}(k_{0,0}\bm{\alpha}^{0}+k_{0,1}\bm{\alpha}^{1})+\lambda_{1}(k_{1,0}\bm{\alpha}^{0}+k_{1,1}\bm{\alpha}^{1})\\
 &=(\lambda_{0}k_{0,0}+\lambda_{1}k_{1,0})\bm{\alpha}^{0}+(\lambda_{0}k_{0,1}+\lambda_{1}k_{1,1})\bm{\alpha}^{1}.
\end{align*}
Define$\lambda_{0}k_{0,0}+\lambda_{1}k_{1,0}=c_{0},\lambda_{0}k_{0,1}+\lambda_{1}k_{1,1}=c_{1}$. Then 
\begin{equation}\label{four vectors linearly dependent}
\lambda_{0}\bm{\alpha}^{q}+\lambda_{1}\bm{\alpha}^{q+1}=c_{0}\bm{\alpha}^{0}+c_{1}\bm{\alpha}^{1}  .
\end{equation}
From Eq.(\ref{four vectors linearly dependent}), we analyze the following cases:
\begin{itemize}
    \item [(1)] If $\lambda_{1}=0$, then $\lambda_{0}\neq0$ and $\bm{\alpha}^{q}=\frac{c_{1}}{\lambda_{0}}\bm{\alpha}+\frac{c_{0}}{\lambda_{0}}\bm{\alpha}^{0}$. Let $a=\frac{c_{1}}{\lambda_{0}}\in\mathbb{F}_{q^{2}}$ and $b=\frac{c_{0}}{\lambda_{0}}\in\mathbb{F}_{q^{2}}$. Then each 
$\alpha_{i}$ is a root of the equation $x^{q}=ax+b$.\\
\item[(2)] If $\lambda_{1}\neq0$ and $\lambda_{1}+\lambda_{1}^{q}\neq0$, then  each $\alpha_{i}$ is a root of 
$\lambda_{1}x^{q+1}+\lambda_{0}x^{q}-c_{1}x-c_{0}=0$.
and also a root of $\lambda_{1}^{q}x^{q+1}+\lambda_{0}^{q}x-c_{1}^{q}x^{q}-c_{0}^{q}=0$.
Adding these two equations yields  $(\lambda_{1}+\lambda_{1}^{q})x^{q+1}+(\lambda_{0}-c_{1}^{q})x^{q}+(\lambda_{0}^{q}-c_{1})x-(c_{0}+c_{0}^{q})=0$.
Define $a=\frac{\lambda_{0}-c_{1}^{q}}{\lambda_{1}+\lambda_{1}^{q}}$ and $b=\frac{c_{0}+c_{0}^{q}}{\lambda_{1}+\lambda_{1}^{q}}+a^{q+1}\in \mathbb{F}_{q}^{*}$. Then each $\alpha_{i}$ is a root of
$(x+a)^{q+1}=b$.\\
\item[(3)] If $\lambda_{1}\neq0$ and $\lambda_{1}+\lambda_{1}^{q}=0$, then $\lambda_{0}-c_{1}^{q}\neq0$. Define $a=-\frac{\lambda_{0}^{q}-c_{1}}{\lambda_{0}-c_{1}^{q}}$ and $b=\frac{c_{0}+c_{0}^{q}}{\lambda_{0}-c_{1}^{q}}\in\mathbb{F}_{q^{2}}$. Then each $\alpha_{i}$ is a root of $x^{q}=ax+b$.
\end{itemize}

\end{itemize}

\end{proof}

{\bf Remark 3:} The result shows that when $n\leq q+1$, there are exactly two classes of Hermitian self-dual GRS codes. Consequently, the conjecture in \cite{2019yue} is correct and has now been proven. 
\section{Constructions}

In fact, the GRS code $GRS_{n,k}({\bm\alpha},{\bm v})$ can be viewed as a class of linear codes determined by the evaluation points $\bm\alpha$ and the column vector $\bm v$. In the preceding sections, we characterized the evaluation points $\bm\alpha$ for which $GRS_{n,k}({\bm\alpha},{\bm v})$ can be Hermitian self-dual. Next, we will determine the values of $\bm v$ corresponding to a given $\bm\alpha$ and further provide explicit constructions for Hermitian self-dual GRS codes.

In this section, we will use $G'(x)$ to denote the formal derivative of the polynomial $G(x)$.

%\begin{equation}\label{v matirx}
% I_{k}=\left(\begin{array}{cccc}
%           1 &   &        &   \\
%              & 1 &        &   \\
%              &   & \ddots &   \\
%              &   &        & 1
%        \end{array}
%    \right),V_{n}({\bm \alpha})=\left(\begin{array}{cccc}
 %           1                & 1                & \cdots                   & 1                \\
 %           \alpha_{1}       & \alpha_{2}       & \cdots                   & \alpha_{n}       \\
 %           \vdots           & \vdots           & \textcolor{blue}{\ddots} & \vdots           \\
 %           \alpha_{1}^{n-1} & \alpha_{2}^{n-1} & \cdots                   & \alpha_{n}^{n-1}
 %       \end{array}\right),\mathrm{diag}({\bm v})=\left(\begin{array}{cccc}
 %           v_{1} &       &        &       \\
 %                 & v_{2} &        &       \\
 %                 &       & \ddots &       \\
  %                &       &        & v_{n}
  %      \end{array}\right).
%\end{equation}

\begin{theorem}\label{construction theorem}
    Let $\mathbb{F}_{q^{2}}$ be a finite field with $q^{2}$ elements, where $q$ is a prime power. Let $n=2k$, $\alpha_{1},\alpha_{2},\cdots,\alpha_{n}$ be distinct elements in $\mathbb{F}_{q^{2}}^{*}$. The generalized Reed-Solomon code $GRS_{n,k}({\bm \alpha},{\bm v})$ is Hermitian self-dual if and only if $$deg (f(x)m^{i}(x))\mod G(x))\leq k-1,i=0,1,\cdots,k-1,$$
    where $G(x)=\prod\limits_{i=1}^{n}(x-\alpha_{i})$, $f(x)=\sum\limits_{i=1}^{n}\frac{v_{i}^{q+1}}{u_{i}}f_{i}(x)$,  $m(x)=\sum\limits_{i=1}^{n}\alpha_{i} ^{q}f_{i}(x)$, $f_{i}(x)=\prod\limits_{j=1,j\neq i}^{n}\frac{x-\alpha_{j}}{\alpha_{i}-\alpha_{j}}$,$u_{i}=\frac{1}{G'(\alpha_{i})},i=1,2,\cdots,n$.
\end{theorem}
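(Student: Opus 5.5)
The plan is to rewrite the Hermitian self-duality condition from Lemma \ref{nonzero solution} as a family of linear functionals applied to polynomials reduced modulo $G(x)$. Then we use the classical fact that $\sum_{l} h(\alpha_l)/G'(\alpha_l)$ extracts the coefficient of $x^{n-1}$ of $h$.

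First I establish this auxiliary identity. For any polynomial $h(x)$ with $\deg h\le n-1$, Lagrange interpolation gives $h(x)=\sum_{l=1}^{n}h(\alpha_l)f_l(x)$. Since
$$f_l(x)=\frac{G(x)}{(x-\alpha_l)G'(\alpha_l)}=u_l\frac{G(x)}{x-\alpha_l},$$
the leading coefficient of $f_l$ in degree $n-1$ is $u_l$. Comparing coefficients of $x^{n-1}$ therefore yields
$$\sum_{l=1}^{n}u_l h(\alpha_l)=[x^{n-1}]h(x).$$
For an arbitrary polynomial $H$, the reduction $H \bmod G$ agrees with $H$ at every $\alpha_l$. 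Hence $\sum_{l} u_l H(\alpha_l)$ is the coefficient of $x^{n-1}$ in $H \bmod G$.

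Next I translate the self-duality condition. As in the proof of Lemma \ref{nonzero solution}, $GRS_{n,k}(\bm\alpha,\bm v)$ is Hermitian self-dual if and only if
$$\sum_{l=1}^{n}\alpha_l^{i}\alpha_l^{jq}v_l^{q+1}=0,\qquad 0\le i,j\le k-1 .$$
By construction $f(\alpha_l)=v_l^{q+1}/u_l$ and $m(\alpha_l)=\alpha_l^{q}$, since $f_i(\alpha_l)=\delta_{il}$. So the condition becomes
$$\sum_{l}u_l\,\alpha_l^{i}f(\alpha_l)m(\alpha_l)^{j}=0 .$$
Put $h_j(x)=f(x)m^{j}(x)\bmod G(x)$, which has degree at most $n-1$ and satisfies $h_j(\alpha_l)=f(\alpha_l)m(\alpha_l)^j$. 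By the identity above, the condition says that for every $j\le k-1$ and every $i\le k-1$, the coefficient of $x^{n-1}$ in $x^{i}h_j(x)\bmod G(x)$ vanishes.

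It remains to show that, for a fixed $j$, this holds for all $0\le i\le k-1$ if and only if $\deg h_j\le k-1$.

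If $\deg h_j\le k-1$, then $\deg x^ih_j\le 2k-2=n-2$. No reduction modulo $G$ occurs, and the $x^{n-1}$ coefficient is $0$.

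Conversely, suppose $d=\deg h_j\ge k$ (with $d\le n-1$). Choose $i=n-1-d$, so that $0\le i\le k-1$. Then $x^ih_j$ has degree exactly $n-1$, needs no reduction, and its $x^{n-1}$ coefficient is the nonzero leading coefficient of $h_j$, a contradiction.

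The only delicate point is this converse choice of $i$; it is precisely where $n=2k$ is used, to ensure $n-1-d\le k-1$. The assumption $\alpha_l\neq 0$ plays no role in the argument.
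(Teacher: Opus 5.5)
Your proof is correct, and it takes a different route from the paper's.

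The paper argues structurally. It uses $C=C^{\perp H}\iff C^{q}=C^{\perp E}$ and takes as known the Euclidean dual $C^{\perp E}=GRS_{n,k}(\bm{\alpha},\bm{u}/\bm{v})$, generated by $[I_k,0]V_n(\bm{\alpha})\mathrm{diag}(\bm{u}/\bm{v})$. Since the dimensions agree, it reduces to $C^{q}\subseteq C^{\perp E}$. That means each vector $(v_l^{q+1}\alpha_l^{iq}/u_l)_l$ must be the evaluation vector of some $g_i$ of degree at most $k-1$. Uniqueness of interpolation in degree at most $n-1$ then forces $g_i=f m^i \bmod G$.

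You instead start from the Gram conditions $\sum_l \alpha_l^{i+jq}v_l^{q+1}=0$ of Lemma~\ref{nonzero solution}. You prove inline the coefficient-extraction identity $\sum_l u_l h(\alpha_l)=[x^{n-1}]h$, which is exactly the fact underlying the dual-GRS formula the paper takes for granted. You then replace the row-space membership argument with an elementary observation: $\deg h_j\le k-1$ if and only if the $x^{n-1}$-coefficient of $x^i h_j$ vanishes for every $0\le i\le k-1$.

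What each buys:
\begin{itemize}
\item The paper's version is shorter and names the interpolating polynomial $g_i$ explicitly, but it rests on the external description of the dual of a GRS code.
\item Your version is self-contained and ties directly to the inner-product characterization used earlier in the paper.
\item Your version also shows exactly where $n=2k$ is needed: the bound $2k-2\le n-2$ in one direction, and $0\le n-1-d\le k-1$ in the other.
\end{itemize}
As you note, neither argument uses $\alpha_l\neq 0$.
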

\begin{proof}
    Let $C=GRS_{n,k}({\bm \alpha},{\bm v})$ be the generalized Reed-Solomon code associated with the evaluation points $\bm{\alpha}=(\alpha_{1},\alpha_{2},\cdots,\alpha_{n})$ and the column vector 
    $\bm{v}=(v_{1},v_{2},\cdots,v_{n})$. The code $C$ is Hermitian self-dual if and only if $C=C^{\perp H}$, which is equivalent to $C^{\perp E}=C^{q}$, where $C^{q}=\{c^{q}|c\in C\}$ and $c^{q}=(c_{1}^{q},c_{2}^{q},\cdots,c_{n}^{q})$ for $c=(c_{1},c_{2},\cdots,c_{n})$. It is well known that the generator matrix of $C^q$ is $G^q=[I_{k},0]V_{n}(\bm\alpha)^qdiag(\bm v^q)$ and the parity-check matrix of $C$ is $H=[I_{k},0]V_{n}(\bm{\alpha})diag(\frac{\bm{u}}{\bm{v}})$, which is also the generator matrix of $C^{\perp E}$. Since $dim C^q=dim C^{\perp E}$, the condition $C=C^{\perp H}$ is equivalent to $C^q\subseteq C^{\perp E}$. This, in turn, is equivalent to the requirement that the row vectors of $[I_{k},0]V_{n}(\bm\alpha)^{q}diag(\bm v^{q+1})$ can be expressed as linear combinations of the rows of $[I_{k},0]V_{n}(\bm{\alpha})$.  
    
    Then for the vector $(\dfrac{v_{1}^{q+1}}{u_{1}}\alpha_{1}^{iq},\dfrac{v_{2}^{q+1}}{u_{2}}\alpha_{2}^{iq},\cdots,\dfrac{v_{n}^{q+1}}{u_{n}}\alpha_{n}^{iq})$ , there exists polynomial $g_{i}(x)\in\mathbb{F}_{q^{2}}[x]_{k-1}$ such that
    $$(\dfrac{v_{1}^{q+1}}{u_{1}}\alpha_{1}^{iq},\dfrac{v_{2}^{q+1}}{u_{2}}\alpha_{1}^{iq},\cdots,\dfrac{v_{n}^{q+1}}{u_{n}}\alpha_{1}^{iq})=(g_{i}(\alpha_{1}),g_{i}(\alpha_{2}),\cdots,g_{i}(\alpha_{n})),$$
for $0 \leq i\leq k-1$.  Indeed, since

     $(\alpha_{1}^{q},\alpha_{2}^{q},\cdots,\alpha_{n}^{q})=(m(\alpha_{1}),m(\alpha_{2}),\cdots,m(\alpha_{n}))$ and 
     $(\dfrac{v_{1}^{q+1}}{u_{1}},\dfrac{{v_{2}^{q+1}}}{u_{2}},\cdots,\dfrac{v_{n}^{q+1}}{u_{n}})=(f(\alpha_{1}),f(\alpha_{2}),\cdots,f(\alpha_{n}))$, we have
  $$(\dfrac{v_{1}^{q+1}}{u_{1}}\alpha_{1}^{iq},\dfrac{v_{2}^{q+1}}{u_{2}}\alpha_{1}^{iq},\cdots,\dfrac{v_{n}^{q+1}}{u_{n}}\alpha_{1}^{iq})=( f(\alpha_{1})m^{i}(\alpha_{1}),  f(\alpha_{2})m^{i}(\alpha_{2}),\cdots, f(\alpha_{n})m^{i}(\alpha_{n}))$$,
 also holds for $0\leq i\leq k-1$.    
   Thus, $g_{i}(x)\equiv f(x)m^{i}(x) \mod G(x)$, and $C$ is Hermitian self-dual if only if $$deg (f(x)m^{i}(x)\mod G(x))\leq k-1,$$ for $i=0,1,2,\cdots,k-1.$
\end{proof}
\subsection{The First class of Hermitian self-dual GRS codes}
\begin{lemma}\label{the first class set}
Let $q$ be a prime power, let $\mathbb{F}_{q^{2}}$ be the finite field with $q^{2}$ elements, and let $a,b \in\mathbb{F}_{q^{2}}$. Define the set
$$S=\{\alpha\in \mathbb{F}_{q^{2}}|\alpha^{q}=a\alpha+b\}.$$
Then the cardinality $|S|>1$ if and only if $a^{q+1}=1, b^{q}+a^{q}b=0$.
\end{lemma}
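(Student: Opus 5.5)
The set $S$ is the zero set in $\mathbb{F}_{q^2}$ of the $\mathbb{F}_q$-affine map $\phi(x)=x^{q}-ax-b$. The map $L(x)=x^{q}-ax$ is $\mathbb{F}_q$-linear on $\mathbb{F}_{q^2}$, a $2$-dimensional $\mathbb{F}_q$-space. So $S$ is either empty or a coset $\alpha_0+\ker L$. Hence $|S|>1$ if and only if $\ker L\neq\{0\}$ and $b\in\operatorname{Im}L$. I will identify each of these two conditions with one of the stated equations.

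For the kernel, a nonzero $x$ with $x^{q}=ax$ satisfies $x^{q-1}=a$. Raising to the $(q+1)$-st power gives $a^{q+1}=x^{q^2-1}=1$. Conversely, if $a^{q+1}=1$, then $a$ lies in the subgroup of $(q+1)$-th roots of unity in $\mathbb{F}_{q^2}^{*}$. This subgroup is exactly the image of $x\mapsto x^{q-1}$, since that map is a homomorphism of the cyclic group $\mathbb{F}_{q^2}^{*}$ with kernel of size $q-1$ and image of size $q+1$. So $a=x_0^{q-1}$ for some $x_0\neq 0$, and then $\ker L=\mathbb{F}_q x_0$ is $1$-dimensional. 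Thus $\ker L\neq 0$ if and only if $a^{q+1}=1$.

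For the image, I assume $a^{q+1}=1$, so $\dim\operatorname{Im}L=1$. Every $y=x^{q}-ax$ in the image satisfies
\[
y^{q}+a^{q}y=x^{q^2}-a^{q}x^{q}+a^{q}x^{q}-a^{q+1}x=x-x=0 .
\]
Hence $\operatorname{Im}L\subseteq\ker(y\mapsto y^{q}+a^{q}y)$. The second map is also $\mathbb{F}_q$-linear, and it is not identically zero on $\mathbb{F}_{q^2}$, since it is a nonzero polynomial of degree $q<q^2$. Its kernel therefore has dimension at most $1$, so the two one-dimensional spaces coincide. Consequently $b\in\operatorname{Im}L$ if and only if $b^{q}+a^{q}b=0$.

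Combining the two steps proves both directions. The necessity direction also needs the observation that $|S|>1$ forces $\ker L\neq0$: the difference of two distinct elements of $S$ is a nonzero element of $\ker L$, which gives $a^{q+1}=1$. It also needs $S\neq\emptyset$, which gives $b\in\operatorname{Im}L$ and hence $b^{q}+a^{q}b=0$. The only step needing care is the equality of the image with the kernel of $y\mapsto y^{q}+a^{q}y$, and the dimension count above handles it. When the conditions hold we get $|S|=q$, a fact likely useful later in the construction.
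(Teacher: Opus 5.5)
Your argument is correct and complete, but it is organized differently from the paper's.

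The paper works elementwise with a case split ($a=b=0$; $b=0\neq a$; $b\neq 0$). Its key step is to raise $\alpha^{q}=a\alpha+b$ to the $q$-th power, which gives $(1-a^{q+1})\alpha=a^{q}b+b^{q}$ for every $\alpha\in S$. That relation yields necessity immediately, with no linear algebra: two distinct elements of $S$ force $a^{q+1}=1$, and then $a^{q}b+b^{q}=0$.

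The relation cannot, however, yield sufficiency. Under the two conditions it collapses to $0=0$, which says nothing about whether $x^{q}-ax-b$ has roots in $\mathbb{F}_{q^{2}}$. Likewise, the paper's subcase $b=0$ tacitly uses that $a$ is a $(q-1)$-th power.

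Your $\mathbb{F}_{q}$-linear framing supplies exactly these missing existence statements. The surjectivity of $x\mapsto x^{q-1}$ onto the $(q+1)$-th roots of unity settles the kernel. The dimension count $\operatorname{Im}L=\ker(y\mapsto y^{q}+a^{q}y)$ settles solvability; its inclusion half is the same Frobenius computation the paper uses. You also get $|S|=q$ for free, so the first construction can use at most $q$ evaluation points.

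One step you leave implicit: $\ker L=\mathbb{F}_{q}x_{0}$ needs $\ker L\neq\mathbb{F}_{q^{2}}$. This follows from the same degree bound you invoke for $y\mapsto y^{q}+a^{q}y$.

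For comparison, an equally short repair of the paper's route is the identity $(x^{q}-ax-b)^{q}+a^{q}(x^{q}-ax-b)=x^{q^{2}}-x$, valid under the two conditions. It shows that $x^{q}-ax-b$ divides $x^{q^{2}}-x$ and hence has $q$ distinct roots in $\mathbb{F}_{q^{2}}$.
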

\begin{proof}
\begin{itemize}
\item[(1)] If $a=0$ and $b=0$, then clearly $S=\{0\}$, so $|S|=1$.
\item[(2)] If $b=0$ and $a\neq0$, the equation $\alpha^{q}=a\alpha+b$ reduces to $\alpha(\alpha^{q-1}-a)=0$. Thus, $|S|>1$ if and only if $a^{q+1}=1$.
\item[(3)] If $b\neq0$, then from $\alpha^{q}=a\alpha+b$, applying the $q-$th power to both sides gives $\alpha=(a\alpha+b)^{q}=a^{q}\alpha^{q}+b^{q}=a^{q}(a\alpha+b)+b^{q}=a^{q+1}\alpha+a^{q}b+b^{q}$.

If $a^{q+1}\neq1$,this equation has a unique solution for $\alpha$, so $|S|=1$.

If $a^{q+1}=1$, the equation simplifies to $0=a^{q}b+b^{q}$, so $|S|>1$ if and only if $a^{q}b+b^{q}=0$.
\end{itemize}
Combining cases (1),(2), and (3), we conclude that $|S|>1$ if and only if $a^{q+1}=1, b^{q}+a^{q}b=0$.
\end{proof}
\begin{center}
$S_{1}=\{\alpha\in \mathbb{F}_{q^{2}}|\alpha^{q}=a\alpha+b\}$ for $a^{q+1}=1, b^{q}+a^{q}b=0$.   
\end{center}

When $\{\alpha_{1},\alpha_{2},\cdots,\alpha_{n}\}\subseteq S_{1}$, the polynomial $m(x)$ in Theorem \ref{construction theorem} is $m(x)=ax+b$.  By Theorem \ref{construction theorem} and Lemma \ref{the first class set}, we see that for $GRS_{n,k}(\bm\alpha,\bm v)$ is Hermitian self-dual, the polybomial $f(x)$ in Theorem \ref{construction theorem} must satisfy $deg f(x)=0$, i.e. $f(x)=\lambda$ for some $\lambda\in \mathbb{F}_{q^{2}}^{*}$. Combining this with Lemma \ref{the first class set}, we obtain the following construction of Hermitian self-dual GRS codes.

\newtheorem{construction}{Construction}
\begin{construction}
  Let $\mathbb{F}_{q^{2}}$ be a finite field with $q^{2}$ elements, where $q$ is a prime power, and let $a,b\in\mathbb{F}_{q^{2}}$ satisfy $a^{q+1}=1$ and $b^{q}+a^{q}b=0$.  Let $n=2k$, and let $\{\alpha_{1},\alpha_{2},\cdots,\alpha_{n}\}\subseteq S_{1}$ be a set of distinct  elements. Then the generalized Reed-Solomon code $GRS_{n,k}({\bm \alpha},{\bm v})$ is Hermitian self-dual if and only if there exists $\lambda\in \mathbb{F}_{q^{2}}^{*}$ such that $$\frac{v_{i}^{q+1}}{u_{i}}=\lambda,$$
  where $u_{i}=\frac{1}{G'(\alpha_{i})}$, $G(x)=\prod\limits_{i=1}^{n}(x-\alpha_{i})$ for $i=1,2,\cdots,n$.   
\end{construction}
\subsection{The Second class of Hermitian self-dual GRS codes }

 $$S_{2}=\{\alpha\in \mathbb{F}_{q^{2}}|(\alpha+a)^{q+1}=b\}.$$  
When $\alpha_{1},\alpha_{2},\cdots,\alpha_{n}$ are roots of $x^{q+1}=1$, it follows that $\alpha_{i}^{q}=\alpha_{i}^{-1}$ for $i=1,2,\cdots,n$. Thus, the polynomial $m(x)$ in Theorem \ref{construction theorem} is given by $m(x)=x^{-1}\mod{G(x)}$. By Theorem \ref{construction theorem}, the polynomial  $f(x)$ in that theorem is $f(x)=x^{k-1}$. This leads to the following construction of Hermitian self-dual GRS codes.

\begin{construction}
   Let $\mathbb{F}_{q^{2}}$ be a finite field with $q^{2}$ elements, where $q$ is a prime power, let $n=2k$, and let $\{\alpha_{1},\alpha_{2},\cdots,\alpha_{n}\}\subseteq S_{2}$. Then the generalized Reed-Solomon code $GRS_{n,k}({\bm \alpha},{\bm v})$ is Hermitian self-dual if and only if there exists a polynomial $g(x)=(x+a)^{k-1}\in \mathbb{F}_{q^{2}}[x]_{k}$ such that $$\frac{v_{i}^{q+1}}{u_{i}}=g(\alpha_{i}),$$
  where $u_{i}=\frac{1}{G'(\alpha_{i})}$ and $G(x)=\prod\limits_{i=1}^{n}(x-\alpha_{i})$ for $i=1,2,\cdots,n$.   
\end{construction}
\section{Conclusion}
In this paper, we present a comprehensive and rigorous resolution to the long-standing open problem of characterizing Hermitian self-dual generalized Reed–Solomon (GRS) codes. Our main contributions are as follows:
\begin{itemize}
    \item[(1)] {\bf Existence Criterion:} Since Hermitian self-dual GRS codes do not exist when the code length $n > q+1$. For $n \leq q+1$, we establish that exactly two distinct classes of such codes exist, thereby providing a complete proof of the conjecture in \cite{2019yue}.\\
    \item[(2)] {\bf Explicit Constructions:} We derive two explicit construction methods for Hermitian self-dual GRS codes, each corresponding to one of the two identified classes. These constructions are concrete and directly implementable, addressing both the existence and explicit construction of these codes.
\end{itemize}
This work settles the foundational questions of existence and classification for Hermitian self-dual GRS codes, and our explicit constructions offer a practical pathway for their use in coding theory and related applications.

\section*{Acknowledgement}
%\section*{Acknowledgments}
%This should be a simple paragraph before the References to thank those individuals and institutions who have supported your work on this article.
%{\appendix[Proof of the Zonklar Equations]
%Use $\backslash${\tt{appendix}} if you have a single appendix:
%Do not use $\backslash${\tt{section}} anymore after $\backslash${\tt{appendix}}, only $\backslash${\tt{section*}}.
%If you have multiple appendixes use $\backslash${\tt{appendices}} then use $\backslash${\tt{section}} to start each appendix.
%You must declare a $\backslash${\tt{section}} before using any $\backslash${\tt{subsection}} or using $\backslash${\tt{label}} ($\backslash${\tt{appendices}} by itself
%starts a section numbered zero.)}
%{\appendices
%\section*{Proof of the First Zonklar Equation}
%Appendix one text goes here.
% You can choose not to have a title for an appendix if you want by leaving the argument blank
%\section*{Proof of the Second Zonklar Equation}
%Appendix two text goes here.}
%\section{References Section}
%You can use a bibliography generated by BibTeX as a .bbl file.
%BibTeX documentation can be easily obtained at:
%http://mirror.ctan.org/biblio/bibtex/contrib/doc/
% The IEEEtran BibTeX style support page is:
% http://www.michaelshell.org/tex/ieeetran/bibtex/
% argument is your BibTeX string definitions and bibliography database(s)
%\bibliography{IEEEabrv,../bib/paper}
%
%\section{Simple References}
%You can manually copy in the resultant .bbl file and set second argument of $\backslash${\tt{begin}} to the number of references
% (used to reserve space for the reference number labels box).

\end{document}